\documentclass[a4paper, 12pt]{article}

\usepackage[english]{babel}
\usepackage[sort&compress]{natbib}
\bibliographystyle{apalike}

\bibpunct{(}{)}{;}{a}{}{,} 

\usepackage{amsthm, amsmath, amssymb, mathrsfs, multirow, url, subfigure}
\usepackage{graphicx} 
\usepackage{ifthen} 
\usepackage{amsfonts}
\usepackage[usenames]{color}
\usepackage{fullpage}
\usepackage{booktabs}

\theoremstyle{plain}

\newtheorem{prop}{Proposition}

\theoremstyle{definition}

\newtheorem*{asmpS}{Assumption~S}

\theoremstyle{remark} 
\newtheorem{ex}{Example}

\newcommand{\prob}{\mathsf{P}} 
\newcommand{\E}{\mathsf{E}}

\newcommand{\nm}{{\sf N}}

\newcommand{\RR}{\mathbb{R}}

\newcommand{\ZZ}{\mathbb{Z}}

\newcommand{\eps}{\varepsilon}

\usepackage{stackrel}
\newcommand{\Tra}{^{\sf T}}
\newcommand{\Inv}{^{-1}}


\title{Generalized Bayes inference on a linear personalized minimum clinically important difference}
\author{Pei-Shien Wu\footnote{Department of Statistics, North Carolina State University; {\tt pwu9@ncsu.edu}, {\tt rgmarti3@ncsu.edu}} \quad and \quad Ryan Martin$^*$}
\date{\today}

\begin{document}

\maketitle 

\begin{abstract}

Inference on the minimum clinically important difference, or MCID, is an important practical problem in medicine.  The basic idea is that a treatment being statistically significant may not lead to an improvement in the patients' well-being.  The MCID is defined as a threshold such that, if a diagnostic measure exceeds this threshold, then the patients are more likely to notice an improvement.  Typical formulations use an underspecified model, which makes a genuine Bayesian solution out of reach.  Here, for a challenging personalized MCID problem, where the practically-significant threshold depends on patients' profiles, we develop a novel generalized posterior distribution, based on a working binary quantile regression model, that can be used for estimation and inference.  The advantage of this formulation is two-fold: we can theoretically control the bias of the misspecified model and it has a latent variable representation which we can leverage for efficient Gibbs sampling.  To ensure that the generalized Bayes inferences achieve a level of frequentist reliability, we propose a variation on the so-called generalized posterior calibration algorithm to suitably tune the spread of our proposed posterior.

\smallskip

\emph{Keywords and phrases:} binary quantile regression; bootstrap; coverage probability; generalized posterior calibration algorithm; model misspecification.
\end{abstract}

\section{Introduction}
\label{S:intro}  

In clinical trials, {\em statistical significance} is widely used to infer treatment effects.  It is well-known, however, that statistically significant results may not accurately reflect the patients' experiences from receiving treatment. For example, even if it is determined that the treatment significantly reduces, e.g., blood pressure, the patient might feel little or no improvement after receiving the treatment, hence the treatment would not be considered is not {\em practically} or {\em clinically significant}. The so-called {\em minimum clinically important difference} (MCID), first proposed in \cite{jaeschke1989measurement}, aimed to incorporate patients' experiences into the evaluation of a treatment.  The basic idea or objective is to identify a cutoff such that, if the diagnostic measure (the one commonly used to determine statistical significance) exceeds this cutoff, then patients are expected to experience a change associated with the treatment.  This is accomplished by incorporating a ``patient-reported'' treatment outcome into the analysis along with the diagnostic measure.  With this additional information and change in objective, the MCID cutoff could end up being considerably different than that used to determine statistical significance.  In any case, the MCID is a relevant quantity and, as such, methods to estimate and quantify uncertainty about the MCID are needed.  

Early efforts were made, largely in the medical statistics literature, to estimate the MCID, but the methods were somewhat ad hoc \citep{parker2012determination,copay2008minimum}.  A systematic investigation appeared only relatively recently, in \cite{hedayat2015minimum}, where the MCID was defined formally in terms of a classification problem, in particular, as a minimizer of a suitably defined misclassification error.  This formulation, described in more detail in Section~\ref{sec:MCID_setup}, allows for rather complex relationships between the patient-reported outcomes, the diagnostic measures, and the MCID.  The simplest case, what \cite{hedayat2015minimum} call the {\em population MCID} problem, is where the MCID is simply a scalar cutoff, the same for all patients, a ``one size fits all'' level for clinical significance.  The more interesting and challenging case is what they called the {\em personalized MCID}, where the MCID cutoff can be expressed as a function of some patient-specific covariates.  In this way, the clinically significant cutoff can vary from patient to patient.  

The framework put forward in \cite{hedayat2015minimum}, and subsequently taken up by Jiwei Zhao and his group \citep[e.g.][]{zhou2020estimation, zhou2020interval, zhou2021statistical}, put the MCID problem on a firm statistical foundation, through the theory of M-estimation \citep[e.g.,][]{stefanski2002calculus}. But the fact that the loss function linking the patient-reported outcomes and corresponding diagnostic measures to the MCID threshold is not-smooth creates some challenges.  To avoid this, and to make the corresponding objective function more amenable to numerical and theoretical analyses, a natural idea is to suitably smooth the inherently discontinuous MCID loss.  While this smoothing helps in some ways, it hurts in others; in particular, smoothing changes the original optimization problem, which generally creates bias.  An advantage of a generalized Bayes framework over empirical risk minimization is that sampling from the (possibly relatively rough) generalized posterior distribution is often easier than optimization.  Indeed, for the population MCID problem, \cite{syring.martin.mcid} developed a so-called Gibbs posterior distribution for estimation and valid inference, no smoothing required.  For the personalized MCID problem, the added dimension creates some computational challenges for both optimization and sampling.   \cite{hedayat2015minimum} and \cite{zhou2020interval} describe smoothing strategies for risk minimization in the personalized MCID problem.
However, smoothing strategies designed to simplify optimization do not directly assist with sampling from the corresponding Gibbs posterior distribution.  That is, simply replacing a not-smooth loss with a smooth version does not imply existence of an efficient posterior sampling algorithm.  This begs the question: {\em what kind of smoothing will induce an efficient sampling algorithm?} To our knowledge, this has not been considered in the literature, at least not in the MCID literature.  This paper aims to fill this gap.  

A go-to computational strategy is {\em data-augmentation}, the introduction of latent variables in a principled way so that computations in the higher-dimensional problem are simpler and more efficient than a more complicated algorithm working on the lower-dimensional problem  \citep[e.g.][]{tanner1987calculation}.  A classical example of this is in binary regression, where Gaussian auxiliary variables are introduced for the purpose of fitting a probit model  \citep[e.g.][]{albert1993bayesian}.  More recently, \cite{polson2011data} developed a similar data-augmentation strategy for Bayesian-like inference in support vector machines, another type of binary regression/classification problem.  This work inspired us to consider a similar strategy for the personalized MCID problem. Following some background on the MCID problem in Section~\ref{sec:MCID_setup}, our jumping off point is the seminal work of \cite{manski1985semiparametric} and others on {\em binary quantile regression} or {\em BQR} \citep[e.g.,][]{benoit2012binary, mollica2017bayesian}.  In particular, in Section~\ref{S:bbqr} we show that, first, replacing the original MCID loss/empirical risk achieves a degree of smoothing superior to those in the aforementioned references, while still allowing for control on the bias introduced.  Second, this binary quantile regression formulation admits a latent variable representation that naturally leads to an efficient Gibbs sampling algorithm for drawing from our proposed generalized Bayes posterior for the MCID.  


Based on these developments, we have a clear path forward to defining a generalized posterior distribution for the personalized MCID.  Roughly, we simply use the likelihood function of this purposely-misspecified binary quantile regression model as a working likelihood in a typical Bayesian formulation.   It is important to recognize, however, that this model is surely misspecified, so, as \citet{kleijn2012bernstein} make clear, we cannot count on the asymptotic calibration enjoyed by Bayesian posterior distributions under correctly-specified models to ensure that, e.g., our generalized posterior credible intervals are valid.  In other words, it is up to us to ensure that validity is achieved and, for this, we use (a variation on) the {\em generalized posterior calibration} (GPC) strategy originally developed in \cite{syring2019calibrating}.  Section~\ref{S:general.post} provides the details of our proposed generalized posterior construction, calibration, and computation.  Then, in Section~\ref{SS:personalizedMCID_results}, we present some numerical results to compare the performance of our proposed generalized posterior distribution based on the working binary quantile regression model and GPC and the state of the art method for estimation and inference on the linear personalized MCID developed by \cite{zhou2021statistical} and implemented in their R package {\tt MCID} \citep{R.mcid}.  We show that our proposed ``BQR + GPC'' strategy has superior performance in terms of both estimation accuracy and credible interval coverage probability and average length across a range of examples.  Some concluding remarks are given in Section~\ref{S:discuss} and some additional technical details are presented in the Appendix.  

\section{Background}
\label{sec:MCID_setup}

\subsection{Problem formulation}

Consider data $(X,Y) \sim \prob$, where $Y\in\{-1,+1\}$ denotes is the patient-reported outcome, with $Y=1$ indicating the patient felt the treatment was effective, and $X \in \RR$ denotes some diagnostic measure, e.g., a change in blood pressure between before and after treatment.  As mentioned above, recall that the MCID is intuitively defined a threshold on the diagnostic measure beyond which the patients are likely to feel that the treatment was beneficial.  Based on this intuition, \citet{hedayat2015minimum} formally defined the (population) MCID as the minimizer of the risk or expected (weighted) loss function
\[ R(\theta) = \E\{ w(Y) \, \ell_\theta(X,Y) \}, \]
where 
\[ \ell_\theta(x,y) = \tfrac12 \{1 - y \, \text{sign}(x - \theta)\}, \quad (x,y) \in \RR \times \{-1,+1\}, \quad \theta \in \RR, \]
and the expectation is with respect to the joint distribution, $\prob$, of $(X,Y)$.  

\citet{hedayat2015minimum} take $w(y) \equiv 1$, whereas \citet{zhou2020estimation, zhou2021statistical} work with $w$ that satisfies 
\[ w(-1) = (1-\varpi)^{-1} \quad \text{and} \quad w(+1) = \varpi^{-1}, \]
where $\varpi = \prob(Y=+1)$ is the marginal probability of a positive patient reported outcome under $\prob$.  There are several equivalent ways to characterize the MCID, $\theta^\star = \arg\min R(\theta)$, the solution to the above risk minimization problem.  The version that we find most intuitive relies on the function 
\begin{equation}
\label{eq:m}
m(x) = \prob(Y = 1 \mid X=x), \quad x \in \RR, 
\end{equation}
the conditional probability of a positive patient reported outcome, given $X=x$, under $\prob$.  For the two formulations, Hedayat et al.~and Zhou et al., the MCID is shown to be $\theta^\star = \inf\{x: m(x) \geq \frac12\}$ and $\theta^\star = \inf\{x: m(x) \geq \varpi\}$, respectively.  Clearly, if $\varpi = \frac12$, i.e., if the design is {\em balanced} in the sense that positive and negative patient reported outcomes occur in equal proportions, then both formulations are targeting the same MCID.  In general, these two definitions correspond to two distinct interpretations:
\begin{itemize}
\item under Hedayat et al.'s formulation, $m(x) \geq \frac12$ means that a patient with $X=x$ is more likely than not to report the treatment is beneficial; 
\vspace{-2mm}
\item under Zhou et al.'s formulation, $m(x) \geq \varpi$ means that a patient with $X=x$ is more likely to report the treatment is beneficial than a typical patient. 
\end{itemize}
Our view is that both of these interpretations, and corresponding versions of the MCID, are meaningful and potentially relevant in a given application.\footnote{\citet{zhou2020estimation} argue in general favor their MCID by claiming that Hedayat et al.'s version is ``incompatible with imbalance,'' which we disagree with. It is true that Hedayat et al.'s MCID is not well-defined when $m$ is {\em uniformly} greater than or less than $\frac12$, but this is not implied by $\varpi \neq \frac12$. Moreover, FDA guidance on patient-reported outcomes studies (\url{https://www.fda.gov/media/77832/download}) states that, roughly, the design should be such that $m$ spans most/all of the interval $[0,1]$.} In this paper, however, we focus on the latter version, based on Zhou et al.'s formulation.  Of course, in the balanced case where $\varpi = \frac12$, the two approaches and corresponding MCIDs agree.  


Since $\prob$ is unknown, so too is the true MCID, $\theta^\star = \arg \min_\theta R(\theta)$.  However, if iid data $(X_i,Y_i)$, for $i=1,\ldots,n$, is available from $\prob$, then this can be used to estimate the MCID in a very natural way.  That is, define the estimator $\hat\theta_n \in \arg\min_\theta R_n(\theta)$, where 
\[ R_n(\theta) = \frac1n \sum_{i=1}^n w(Y_i) \, \ell_\theta(X_i, Y_i), \]
is the empirical version of the risk function.  This $\hat\theta_n$ defined above is an M-estimator, or an empirical risk minimizer, and its statistical properties have been studied in \cite{hedayat2015minimum}, \cite{syring.martin.mcid}, and  \cite{zhou2020estimation}.  

Beyond the population version of the MCID problem, \cite{hedayat2015minimum} proposed a practically relevant and technically interesting extension.  Suppose, in addition to diagnostic measure $X$ and the patient-reported outcomes $Y$, we also have access to a covariate $Z \in \RR^q$ that provides additional information about the patient's profile.  For example, in a knee surgery to repair torn cartilage, possible covariates would include patient's gender, age, treatment assignment, stage of knee damage, etc. Then this patient-level information encoded in $Z$ can be used to tailor a MCID threshold to the individual patients, a so-called {\em personalized MCID}.  That is, consider a new loss function 
\[ \ell_\theta(x,y,z) = \tfrac12 [1 - y \, \text{sign}\{x - \theta(z)\} ], \]
where now $\theta$ corresponds a function of the covariate.  Then the personalized MCID, $\theta^\star$, is still the minimizer of the expected loss, but now it is a function, and inference on the MCID cutoff for a patient having covariate level fixed at, say, $\tilde z$, can be made.  Some theory for the empirical risk minimizer in the context of a nonparametric personalized MCID can be found in \cite{hedayat2015minimum} and \cite{zhou2020interval}.  

Of course, estimating a genuine ``MCID function'' in practical applications might be too ambitious, but there are other ways this patient covariate information can be incorporated.  Indeed, a very natural approach is to consider a linear function, where the MCID function is determined by a finite-dimensional parameter vector, i.e., 
\[ \theta(z) = \beta^\top z. \]
This formulation has been adopted in \cite{zhou2020estimation, zhou2020interval}, and is the approach we will consider here.  This shifts the initial focus off the MCID function, to the parameter $\beta$ that determines it; any inferences that can be drawn about $\beta$ can readily be translated back to the MCID, the quantity of interest.  To be clear, there is a subtle shift in the notation, and now we express the loss function as 
\[ \ell_\beta(x,y,z) = \tfrac12 \{1 - y \, \text{sign}(x - \beta^\top z) \}, \quad \beta \in \RR^q. \]
Then \cite{zhou2020estimation, zhou2020interval} consider the risk function $R(\beta) = \E\{w(Y) \, \ell_\beta(X,Y,Z)\}$ as above, and $\beta^\star$ the corresponding risk minimizer, which determines the personalized MCID, or MCID function, $\theta^\star(z) = \beta^{\star \top} z$.  The estimator $\hat\beta_n$ can be defined accordingly as a minimizer of the empirical risk $R_n(\beta) = n^{-1} \sum_{i=1}^n w(Y_i) \, \ell_\beta(X_i, Y_i, Z_i)$, based on iid data $D_i = (X_i, Y_i, Z_i)$, for $i=1,\ldots,n$ from $\prob$.



As an alternative to the M-estimation framework, one can similarly construct a Gibbs posterior distribution for (probabilistic) inference about $\beta$.  This Gibbs posterior distribution is defined by the density function 
\[ \beta \mapsto \frac{e^{-c n R_n(\beta)} \, \pi(\beta)}{\int e^{-c n R_n(b)} \, \pi(b) \, db}, \]
where $\pi$ is a prior density for $\beta$ and $c > 0$ is a constant that needs to be carefully specified.  See \citet{martin.syring.chapter2022} for some general background on the Gibbs posterior distributions, which is also relevant to the developments here.  Some theory for Gibbs posteriors specifically in (Hedayat et al.'s version of) the MCID problem can be found in \cite{syring2020gibbs, syring.martin.mcid}. Computation of both the M-estimator $\hat\beta_n$ and the Gibbs posterior would be non-trivial, however, thanks to the non-smooth loss function.  One option, of course, is to replace the not-smooth loss with a smooth surrogate, which we discuss in Section~\ref{SS:smoothing} below.  But simply smoothing the loss function will not make sampling from the Gibbs posterior distribution any easier, so we take an altogether different approach in Section~\ref{S:bbqr}.  


\subsection{Loss function smoothing}
\label{SS:smoothing}

The presence of the sign function makes the original MCID loss function not smooth and, consequently, the empirical risk function is difficult to optimize.  This is not a major concern in the population MCID case, but it is more serious in the personalized MCID case thanks to the increased dimension.  As is often the case in computational problems, if the to-be-optimized function is unpleasant for some reason, then one can consider replacing it with something similar but also easier to manage. In our present MCID application, a surrogate loss function of the following form has been used, in \cite{hedayat2015minimum} and \cite{zhou2020interval}:
\[ \ell_\beta^{(\delta)}(x,y,z) = s_\delta\bigl( y(x - \beta^\top z) \bigr), \]
where $s_\delta(u) = \min\{(1 - u/\delta)_+, 1\}$, with $x_+ = \max(0, x)$ the positive part, and $\delta > 0$ a parameter that controls the smoothness.  Another slightly different form of the function $s_\delta$ has been used in \cite{zhou2020estimation} and \cite{zhou2021statistical}, in particular, 
\[ s_\delta(u) =\begin{cases}
    1       & u\leq 0\\
   1-2(u/\delta)^2  & 0< u\leq \delta/2\\
   2(1-u/\delta)^2 & \delta/2< u\leq \delta\\
   0 & u\geq\delta.
  \end{cases}
\]
As Figure~\ref{f:loss_function} makes clear, for small $\delta$, the surrogate loss functions are close to the original MCID loss function.  Moreover, solving the optimization problem defined by the surrogate loss function is much simpler, and the above authors make use of a representation of the surrogate loss in terms of a difference of convex functions, and they employ a version of the difference-of-convex-functions algorithm \citep{le1997solving}. 

\begin{figure}
    \centering
\scalebox{0.7}{\includegraphics{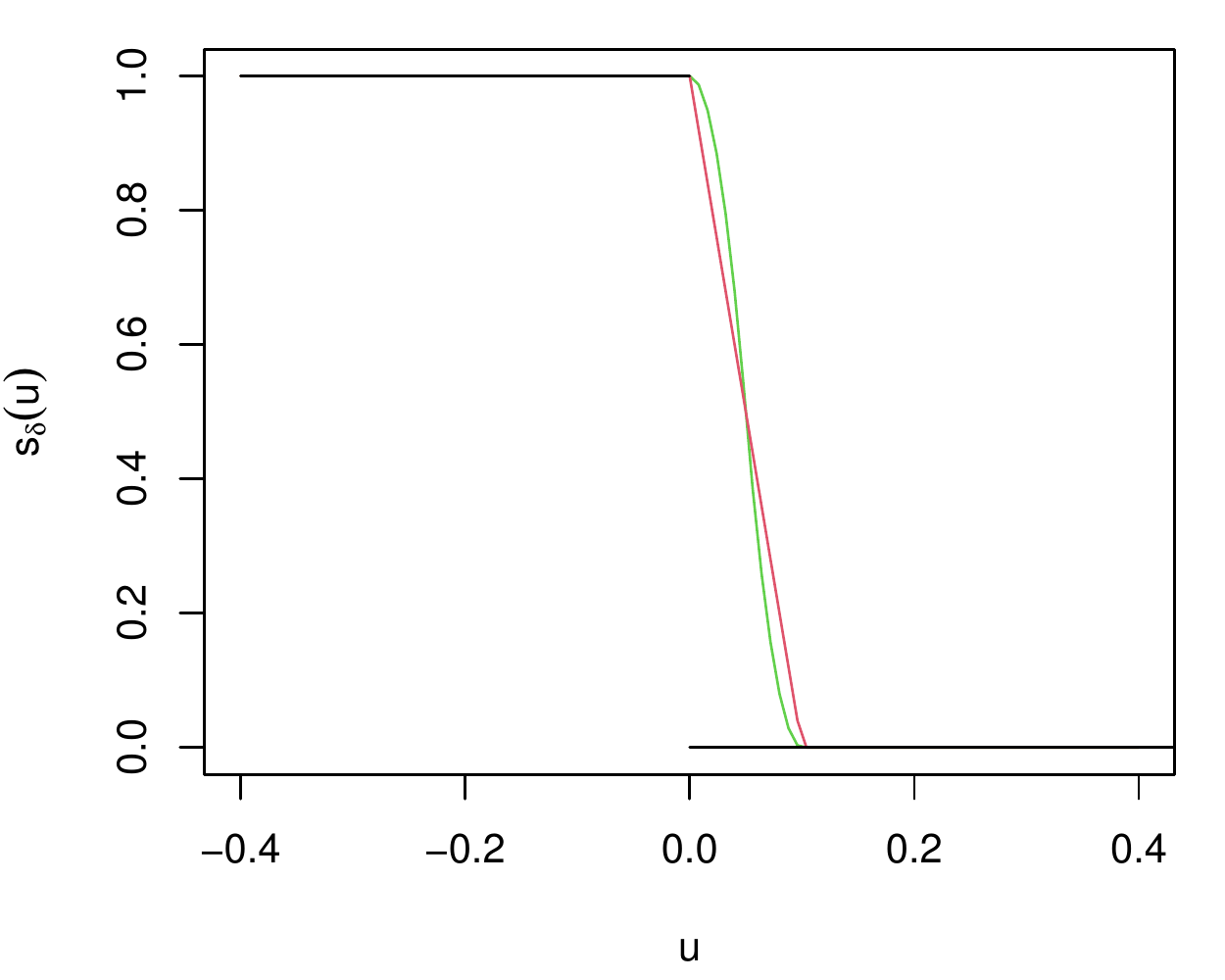}}
    \caption{Black line is the usual 0--1 loss, while the red and green lines are the smoothed loss functions $s_\delta$ in Hedayat et al.~and Zhou et al., respectively, with $\delta=0.1$.}
    \label{f:loss_function}
\end{figure}

The downside to the use of a surrogate loss is that it potentially affects the quality of inferences that can be made.  That is, when the loss is redefined to be smooth, the M-estimator would typically be asymptotically normal, but the center of the limit distribution would not be, in this case, $\beta^\star$, but some other value, say, $\beta_\delta^\dagger$.  Typically, as $\delta \to 0$, $\beta_\delta^\dagger \to \beta^\star$, but if one takes $\delta \to 0$, then there is a price one pays in terms of variance.  That is, when $\delta \to 0$, the bias goes down but the variance goes up; see \citet[][Sec.~2.3]{syring.martin.mcid} for more details.  \citet{zhou2020interval} side-step this challenge by showing, under certain conditions, that their smoothing strategy introduces no bias, {\em even when $\delta > 0$}, and they establish a corresponding asymptotic normality result.  We do, however, still have some concerns: in addition to some technical questions about the role the smoothing parameter plays in their asymptotic variance formula, we found that the coverage probability of their proposed asymptotic normality-based confidence interval falls well short of the nominal level in some of our simulations; see Section~\ref{SS:personalizedMCID_results}.  Therefore, we opt for an entirely different strategy to ensure our proposed generalized Bayes inference is calibrated, or reliable in a frequentist sense.

\section{MCID and binary quantile regression}
\label{S:bbqr}

\subsection{Smooth loss via a working model}

One generalization of the ordinary linear regression model is for binary responses.  A common example is probit regression which, in our present context, would look like  
\begin{align}
m_z(x) & := \prob(Y = +1 \mid X=x, Z=z) \notag \\
& = \Phi(x - \beta^\top z), \label{eq:probit}
\end{align}
where $\Phi$ is the standard normal distribution function.  The relevant computations needed to fit this model to data are most conveniently done via {\em data-augmentation}, the introduction of a continuous latent response variable  \citep[e.g.,][]{albert1993bayesian}.  Our experience is that this probit regression model relies too heavily on the Gaussian form in \eqref{eq:probit} and, therefore, is not ideal for inference on the MCID in general.  But the data-augmentation strategy will still prove to be useful.  

Another variation on classical linear regression model expresses the $\tau^\text{th}$ conditional quantile of the response, $\tau \in (0,1)$, as a function of covariates, rather than the conditional mean.  The original formulation of this quantile regression, in \cite{koenker1978regression}, was based on replacing squared-error loss with the so-called check-loss 
\[ \rho_\tau(u) = u(\tau - 1_{\{u < 0\}}), \quad u \in \RR, \quad \tau \in (0,1). \]

\cite{manski1985semiparametric} combined the above two ideas, where the ``$x - \beta^\top z$'' term represents the conditional median or, more generally, a conditional $\tau^\text{th}$ quantile of the latent response variable.  This is generally referred to as {\em binary quantile regression}. \cite{benoit2012binary} developed a Bayesian version that combines Manski's binary quantile formulation, the usual Bayesian approach to quantile regression \citep{yu2001bayesian} via the asymmetric Laplace distribution, and its data-augmentation version \citep{kozumi2011gibbs}.  The asymmetric Laplace distribution \citep{yu2005three} has distribution function 
\begin{align*}
F_\tau(u) & = \begin{cases}
\tau \exp\{-\rho_\tau(u)\} & \text{if $u \leq 0$} \\ 1 - (1-\tau) \exp\{-\rho_\tau(u)\} & \text{if $u > 0$}, 
\end{cases} 
\end{align*}
where $\tau \in (0,1)$ controls the skewness and $\rho_\tau$ is as above.  When $\tau=\frac12$, the asymmetric Laplace distribution simplifies to an ordinary/symmetric Laplace distribution.  This model can be augmented with a location and scale parameter as usual. 

In our personalized MCID context involving a binary outcome $Y$, a continuous diagnostic measure $X$, and a covariate $Z$, a binary quantile regression model assumes that $Y$ is determined by $\text{sign}(U)$, where the conditional distribution of $U$, given $X=x$ and $Z=z$, is asymmetric Laplace with skewness parameter $\tau$, location parameter $x-\beta^\top z$, and scale parameter $\eta > 0$ to be specified.  That is, the working model for the response $Y$, given $X$ and $Z$, assumes 
\begin{align}
g_\beta(x,z) := & \; \prob(Y=+1 \mid X=x, Z=z) \notag \\
= & \; \prob(U > 0 \mid \tau, X=x, Z=z) \notag \\ 
= & \; 1 - F_\tau\Bigl( \frac{\beta^\top z - x}{\eta} \Bigr) \notag \\
= & \; F_{1-\tau}\Bigl( \frac{x-\beta^\top z}{\eta} \Bigr) \notag \\
= & \, \begin{cases}
(1-\tau) e^{-\tau|x-\beta^\top z| / \eta} & \text{if $x \leq \beta^\top z$} \\
1 - \tau e^{-(1-\tau) |x-\beta^\top z| / \eta} & \text{if $x > \beta^\top z$}. 
\end{cases}
\label{eq:bqr}
\end{align}
Note that $x \mapsto g_\beta(x,z)$ is increasing for fixed $(\beta,z)$, which is consistent with the interpretation that larger values of the diagnostic measure increase the chance that the patient will feel that the treatment was effective.  We want to emphasize that we are not treating this as a genuine ``model'' for the data-generating process, e.g., we do not assume that there are true $\tau$ and $\eta$ values to be estimated.  Nevertheless, the above conditional probability determines a (possibly incorrectly specified) joint distribution for $(X,Y,Z)$, depending on the parameter $\beta$, with density 
\[ h_\beta(x,y,z) = g_\beta(x,z)^{(1+y)/2} \, \{1-g_\beta(x,z)\}^{(1-y)/2} \, p(x,z), \]
where $g_\beta(x,z)$ is as in \eqref{eq:bqr}, $p$ denotes the joint density of $(X,Z)$ supported on $\RR \times \ZZ$, with $\ZZ \subseteq \RR^q$, which does not depend on $\beta$ and can be effectively ignored, and the dependence throughout on $\eta$ and $\tau$ is implicit in the notation.  Again, we do not treat this as a proper ``model'' describing our beliefs about the data-generating process. We are primarily interested in whether 
\begin{equation}
\label{eq:l.bqr}
\ell_\beta^\text{\sc bqr}(x,y,z) := -\log h_\beta(x,y,z), 
\end{equation}
could be treated as a smoothed version of the MCID loss function.  The next subsection answers this question in the affirmative.

\subsection{Fisher consistency}
\label{SS:fisher}

As we show below, under the symmetry condition in Assumption~S below, the function $\ell_\beta^\text{\sc bqr}$ above does indeed serve as suitable smoothed version of the original MCID loss.

\begin{asmpS}
The conditional distribution of $X$, given $(Y,Z)$, has a density 
\begin{equation}
\label{eq:psi}
\psi_{y,z}(x) = \begin{cases} \psi(x - a_-^\top z) & \text{if $y=-1$} \\ \psi(x - a_+^\top z) & \text{if $y = +1$}, \end{cases} 
\end{equation}
where $\psi$ is a symmetric density strictly increasing on $(-\infty,0)$ and strictly decreasing on $(0,\infty)$, and the vectors $a_-$ and $a_+$ satisfy the component-wise inequality $a_+ > a_-$.  
\end{asmpS}

Under Assumption~S, \citet[][Lemma~1]{zhou2021statistical} establish that the ``true MCID,'' i.e., the unrestricted minimum of the risk function over all MCID functions $\theta$, is a linear function of the form $\theta^\star(z) = \beta^{\star \top}z$, where $\beta^\star = \frac12(a_- + a_+)$.  Therefore, this $\beta^\star$ effectively determines the personalized MCID, so we would want our risk minimizer to be equal to, or at least near, this $\beta^\star$ value.  Proposition~\ref{prop:consistent} below shows that, under the same condition,\footnote{Technically, the proof of Proposition~\ref{prop:consistent} only requires symmetry of $\psi$ in \eqref{eq:psi}, the monotonicity conditions are not needed. We formulate the result using Assumption~S to maintain the connection to the expression for $\beta^\star$ given in \citet{zhou2021statistical} under those conditions.} the risk function 
\begin{equation}
\label{eq:R.bqr}
R_\eta^\text{\sc bqr}(\beta) := \E\{\ell_\beta^\text{\sc bqr}(X,Y,Z)\}, 
\end{equation}
with $\ell_\beta^\text{\sc bqr}$ defined in \eqref{eq:l.bqr}, has the same minimizer, $\beta^\star$, as Zhou et al.'s risk function, at least as $\eta \to 0$.  In other words, under Assumption~S, the risk $R_\eta^\text{\sc bqr}$ based on the BQR-driven surrogate loss is {\em Fisher consistent} at $\beta^\star$ \citep[e.g.,][p.~287]{cox.hinkley.book}. 

\begin{prop}
\label{prop:consistent}
Let $\psi_{y,z}$ denote the conditional density of $X$, given $Y$ and $Z$, which need not satisfy the symmetry conditions in Assumption~S.  As $\eta \to 0$, the minimizer of $R_\eta^\text{\sc bqr}(\beta)$ is the solution to the equation 
\begin{equation}
\label{eq:mess}
\tau \varpi \int_\ZZ z \int_{-\infty}^{\beta^\top z} \psi_{+1, z}(x) \, dx \, dz = (1-\tau) (1-\varpi) \int_\ZZ z \int_{\beta^\top z}^\infty \psi_{-1,z}(x) \, dx \, dz. 
\end{equation}
In particular, if Assumption~S holds and $\tau = 1-\varpi$, then $R_\eta^\text{\sc bqr}$ is Fisher consistent at $\beta^\star$ as $\eta \to 0$.  If $\tau=1-\varpi=\frac12$, then $R_\eta^\text{\sc bqr}$ is Fisher consistent at $\beta^\star$ for each $\eta > 0$. 
\end{prop}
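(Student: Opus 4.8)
My plan is to use convexity of the BQR risk throughout, so that each claim reduces to locating a zero of a (sub)gradient. First I would rewrite
\[ \ell_\beta^\text{\sc bqr}(x,y,z) = -\tfrac{1+y}{2}\log g_\beta(x,z) - \tfrac{1-y}{2}\log\{1-g_\beta(x,z)\} - \log p(x,z), \]
where the last term is free of $\beta$ and drops out of the minimization of $R_\eta^\text{\sc bqr}$. Because the asymmetric Laplace density is log-concave, so are its distribution and survival functions, and composing with the affine map $\beta \mapsto (x-\beta^\top z)/\eta$ shows $\beta \mapsto \ell_\beta^\text{\sc bqr}(x,y,z)$ is convex for every $(x,y,z)$; hence $R_\eta^\text{\sc bqr}$ is convex (and, under Assumption~S with a nondegenerate $Z$, coercive), so its minimizers are exactly its stationary points.

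For the first (general) claim I would multiply the risk by $\eta$, which leaves the argmin unchanged, and compute from \eqref{eq:bqr} the pointwise limits $\eta\{-\log g_\beta(x,z)\} \to \tau(\beta^\top z - x)_+$ and $\eta\{-\log(1-g_\beta(x,z))\} \to (1-\tau)(x-\beta^\top z)_+$ as $\eta \to 0$. For $\eta \le 1$ these are dominated, respectively, by $-\log(1-\tau) + \tau\,|x-\beta^\top z|$ and $-\log\tau + (1-\tau)\,|x-\beta^\top z|$, which are $\prob$-integrable under the mild moments $\E\|Z\| < \infty$, $\E|X| < \infty$; dominated convergence then gives $\eta R_\eta^\text{\sc bqr}(\beta) \to L(\beta) := \tau\,\E\{1_{\{Y=+1\}}(\beta^\top Z - X)_+\} + (1-\tau)\,\E\{1_{\{Y=-1\}}(X - \beta^\top Z)_+\}$. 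Since the $\eta R_\eta^\text{\sc bqr}$ and $L$ are finite convex functions on $\RR^q$, this pointwise convergence is automatically uniform on compacta, so $\arg\min R_\eta^\text{\sc bqr} = \arg\min \eta R_\eta^\text{\sc bqr} \to \arg\min L$. Finally, $L$ is convex with no boundary atoms (since $X\mid(Y,Z)$ has a density), so $0 \in \del L(\beta)$ reads $\tau\,\E\{1_{\{Y=+1\}}Z\,1_{\{X < \beta^\top Z\}}\} = (1-\tau)\,\E\{1_{\{Y=-1\}}Z\,1_{\{X > \beta^\top Z\}}\}$, and conditioning on $(Y,Z)$ with $\varpi = \prob(Y=+1)$ turns this into \eqref{eq:mess}. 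The delicate part here --- and the main obstacle of the whole proof --- is exactly this transfer of $\lim_{\eta\to0}$ through $\arg\min$: it is what forces the rescaled-risk detour, the domination bookkeeping, and the use of the (standard but essential) fact that pointwise limits of finite convex functions are uniform on compacta. Literal uniqueness of the solution of \eqref{eq:mess} additionally needs strict convexity of $L$, which holds since $\psi > 0$ on $\RR$ and $Z$ is nondegenerate.

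For the second claim I would substitute $\tau = 1-\varpi$ into \eqref{eq:mess}: the common factor $\varpi(1-\varpi)$ cancels, and under Assumption~S --- writing $\Psi$ for the distribution function of $\psi$ and using $\int_t^\infty \psi = \Psi(-t)$ --- the two sides become $\int_\ZZ z\,\Psi\{(\beta - a_+)^\top z\}\,dz$ and $\int_\ZZ z\,\Psi\{(a_- - \beta)^\top z\}\,dz$. At $\beta = \beta^\star = \tfrac12(a_- + a_+)$ both arguments equal $-\tfrac12(a_+ - a_-)^\top z$, so the identity holds and $\beta^\star$ solves \eqref{eq:mess}; by the strict convexity just noted it is \emph{the} solution, which is the stated Fisher consistency, with $\beta^\star$ matching \citet[][Lemma~1]{zhou2021statistical}.

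For the third claim, $\tau = \tfrac12$ makes $F_{1-\tau} = F_{1/2}$ the symmetric Laplace distribution function, so $1 - g_\beta(x,z) = F_{1/2}\{-(x-\beta^\top z)/\eta\}$ and, up to the $\beta$-free constant, $\ell_\beta^\text{\sc bqr}(x,y,z) = -\log F_{1/2}\{y(x-\beta^\top z)/\eta\}$; differentiating, $\nabla_\beta R_\eta^\text{\sc bqr}(\beta) = \eta^{-1}\,\E[Y Z\,\lambda\{Y(X-\beta^\top Z)/\eta\}]$ with $\lambda = f_{1/2}/F_{1/2}$. Under Assumption~S I would write $X = a_Y^\top Z + \eps$, with $\eps \sim \psi$ symmetric and independent of $(Y,Z)$ and $Z$ independent of $Y$ with $\prob(Y=+1) = \varpi$. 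At $\beta = \beta^\star$ one checks that $Y(a_Y - \beta^\star)^\top Z = \tfrac12(a_+ - a_-)^\top Z$ does not depend on $Y$, while $Y\eps \mid (Y,Z) \sim \psi$ regardless of $Y$; hence $\E[\lambda\{Y(X-\beta^{\star\top}Z)/\eta\}\mid Y,Z]$ is a function of $Z$ alone, and the gradient at $\beta^\star$ factors as $\{\text{a function of }Z\}\times\E(Y)$. When $\varpi = \tfrac12$ we have $\E(Y) = 0$, so $\nabla_\beta R_\eta^\text{\sc bqr}(\beta^\star) = 0$ and, by convexity, $\beta^\star$ minimizes $R_\eta^\text{\sc bqr}$ for every $\eta > 0$.
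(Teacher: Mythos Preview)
Your proof is correct and takes a genuinely different route from the paper's. The paper differentiates $R_\eta^\text{\sc bqr}$ directly: it splits the risk into four pieces according to the sign of $x-\beta^\top z$ and the value of $y$, applies Leibniz's rule to each, and arrives at a first-order condition of the form $\text{LHS}(\beta)=\text{RHS}(\beta,\eta)$, where $\text{LHS}(\beta)=0$ is exactly \eqref{eq:mess} and $\text{RHS}(\beta,\eta)\to 0$ by dominated convergence applied to the explicit exponential ratios; convexity is established via a (sketched) second-derivative calculation, and the $\tau=1-\varpi=\tfrac12$ case is handled by showing the integrand in $\text{RHS}$ is odd about $\beta^{\star\top}z$. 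Your rescaling by $\eta$ and passage to the check-loss limit $L$ bypasses the four-way split and Leibniz bookkeeping entirely---the stationary equation for $L$ \emph{is} \eqref{eq:mess}---and the standard ``pointwise convergence of finite convex functions $\Rightarrow$ convergence of minimizers'' device makes the $\eta\to 0$ transfer rigorous in one stroke, which the paper leaves implicit. Your convexity argument via log-concavity of the asymmetric Laplace distribution and survival functions is also a one-line structural observation versus the paper's longer Hessian sketch. For the third claim, your probabilistic factorization (that $Y(a_Y-\beta^\star)^\top Z$ does not depend on $Y$ and $Y\eps\stackrel{d}{=}\eps$, so the gradient at $\beta^\star$ is proportional to $\E(Y)$) is an elegant alternative to the paper's symmetry cancellation in the integral; both rely on the implicit assumption $Y\perp Z$ (so that $\prob(Y=+1\mid Z)=\varpi$), which you state explicitly and the paper uses without comment.
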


\begin{proof}
See Appendix~\ref{S:proof}.
\end{proof}

Next we give a brief illustration of the result in Proposition~\ref{prop:consistent}.  For simplicity and ease of visualization, we focus on the population MCID case with no covariate $Z$ and a scalar MCID $\theta$.  Suppose that the conditional density of $X$, given $Y=y$, is $\psi_y(x) = \nm(x \mid y, 2^2)$, a normal density with mean $\pm 1$, depending on $y \in \{-1,+1\}$, and standard deviation 2.  We consider two cases: $\varpi = 0.7$ and $\varpi = 0.5$.  Note that the latter is the special case highlighted in Proposition~\ref{prop:consistent}, where Hedayat et al.'s and Zhou et al.'s definitions of the MCID coincide and our risk function is exactly Fisher consistent at this common MCID value.  Having specified the marginal distribution of $Y$ and the conditional distribution of $X$, given $Y$, we have all we need to evaluate the various risk functions.  Figure~\ref{fig:mcid.risk.comp} plots the (scaled\footnote{The risk functions have different scales, so we divide by their respective minima to force the minimum value to be 1 in the plot. This rescaling does not affect the risk function's shape or minimizer.}) risk functions corresponding to Hedayat et al.'s loss, Zhou et al.'s loss, and our BQR-based surrogate loss with $\eta=0.1$.  Panel~(a) corresponds to $\tau=1-\varpi=0.3$ so we see that ours and Zhou et al.'s risk functions are minimized at $\theta=0$ while, as expected, Hedayat et al.'s risk is minimized at a different location, roughly $\theta=-1.8$. Panel~(b) corresponds to the case $\tau=1-\varpi=0.5$ and, as predicted by Proposition~\ref{prop:consistent}, all three risk functions share the same minimizer.  

To follow up on the previous illustration, in Figure~\ref{fig:mcid.erisk.comp}, we plot the empirical risk functions based on the three losses discussed above, along with smoothed versions of Hedayat et al.'s and Zhou et al.'s.  These results are based on a sample of size $n=250$ from the two joint distributions of $(X,Y)$ described above.  As expected, the behavior here mirrors that in Figure~\ref{fig:mcid.risk.comp}, but there are some additional insights to be gleaned here.  The basic empirical risk functions for the Hedayat et al.~and Zhou et al.~formulations are clearly very rough and would be virtually impossible to optimize if this were a higher-dimensional personalized MCID scenario.  The motivation for smoothing is to simplify the optimization and, as the figure shows, the smoothed versions are indeed smoother and, consequently, easier to optimize.  Remarkably, the BQR-based empirical risk points in the right direction and is very smooth, very easy to optimize.  Moreover, as we show below, the real advantage of our proposed formulation of the loss in \eqref{eq:l.bqr} is that it leads to relatively straightforward generalized posterior computations.

\begin{figure}[t]
\begin{center}
\subfigure[$\tau = 1-\varpi = 0.3$]{\scalebox{0.6}{\includegraphics{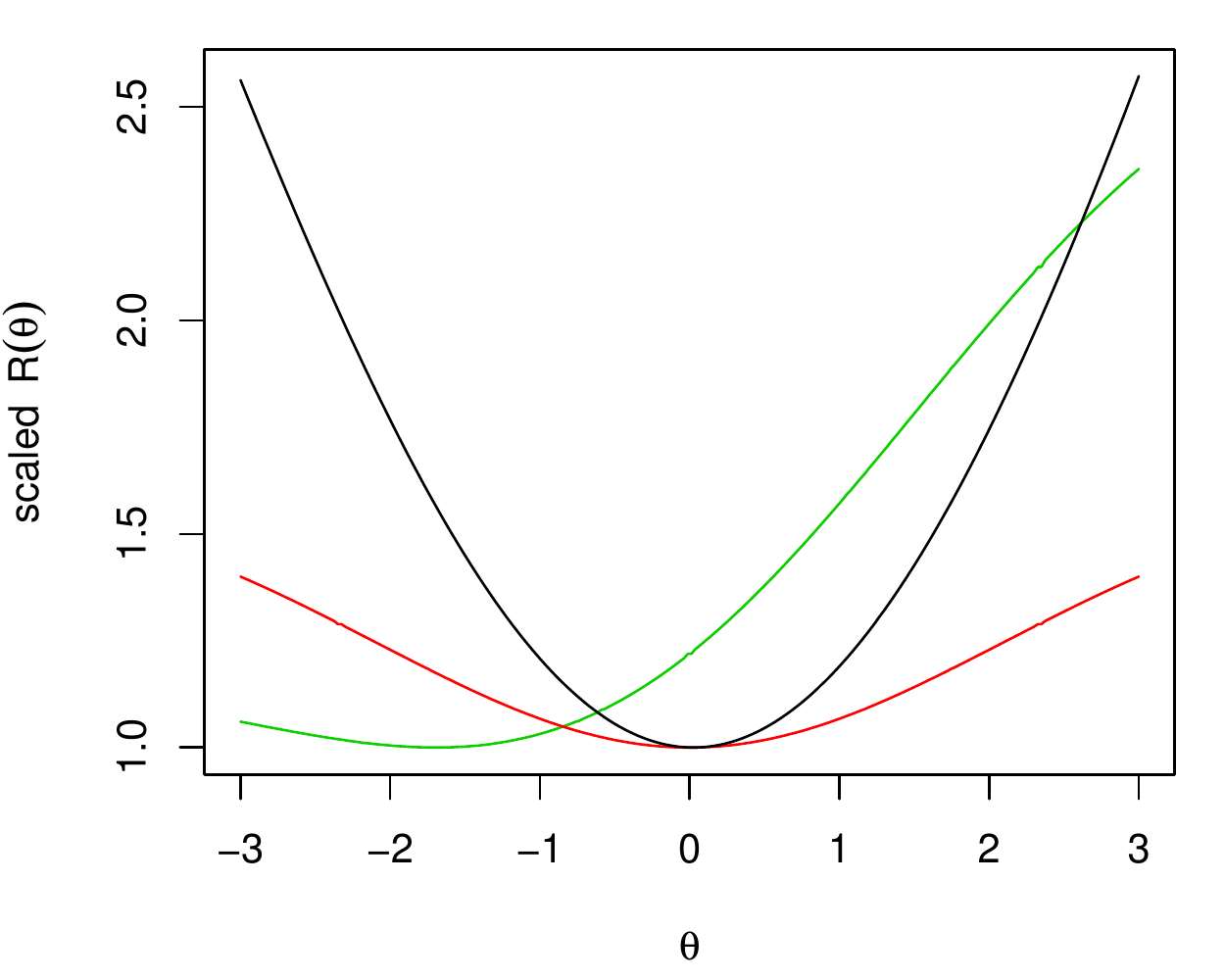}}}
\subfigure[$\tau = 1-\varpi = 0.5$]{\scalebox{0.6}{\includegraphics{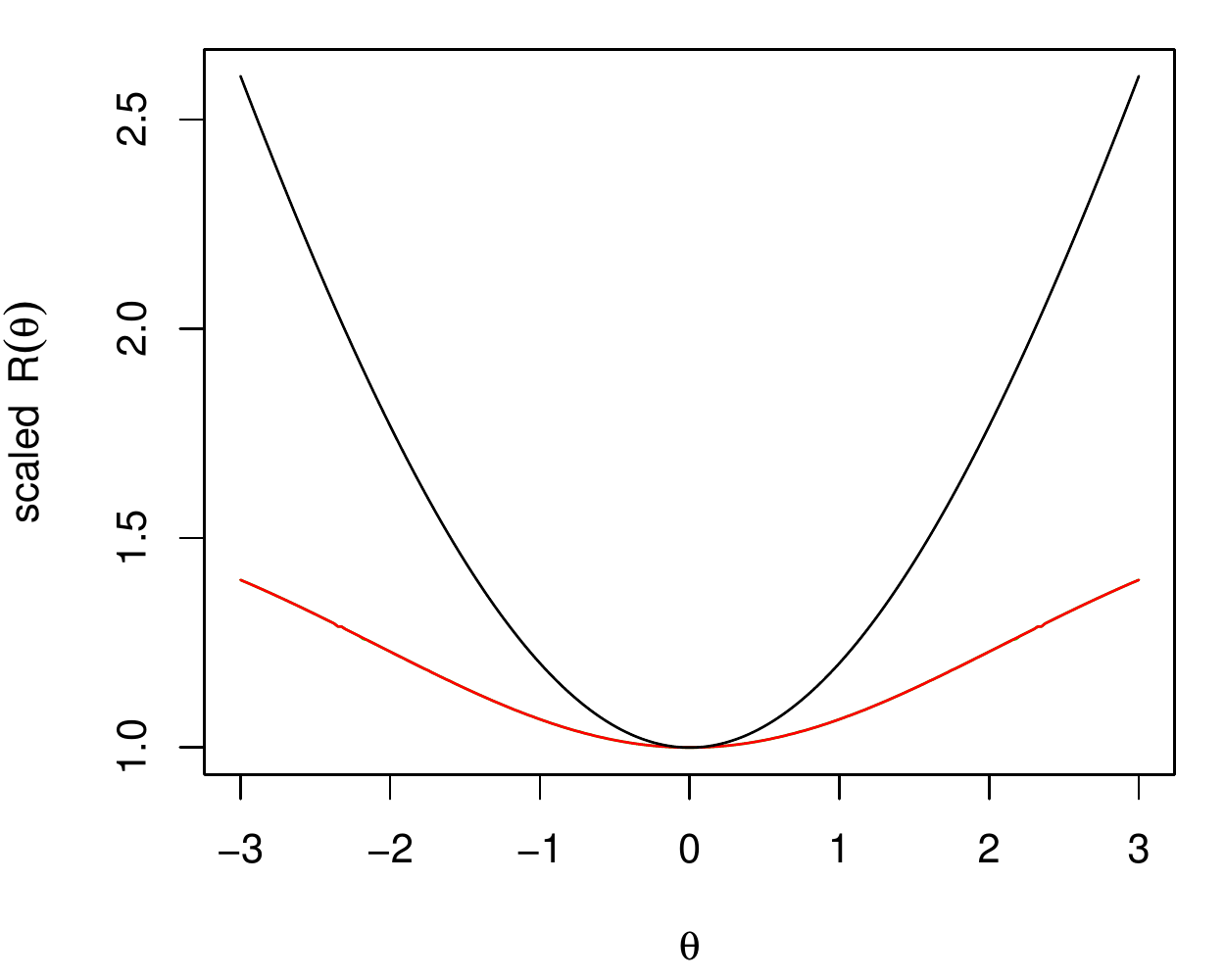}}}
\end{center}
\caption{Plots of the (scaled) risk functions for the two different $\varpi$ values as described in the main text. Red, green, and black lines are the risks based on Zhou et al.'s loss, Hedayat et al.'s loss, and our BQR-based loss with $\eta=0.1$. In Panel~(b), the green line is underneath the red line.}
\label{fig:mcid.risk.comp}
\end{figure}

\begin{figure}[t]
\begin{center}
\subfigure[$\tau = 1-\varpi = 0.3$]{\scalebox{0.6}{\includegraphics{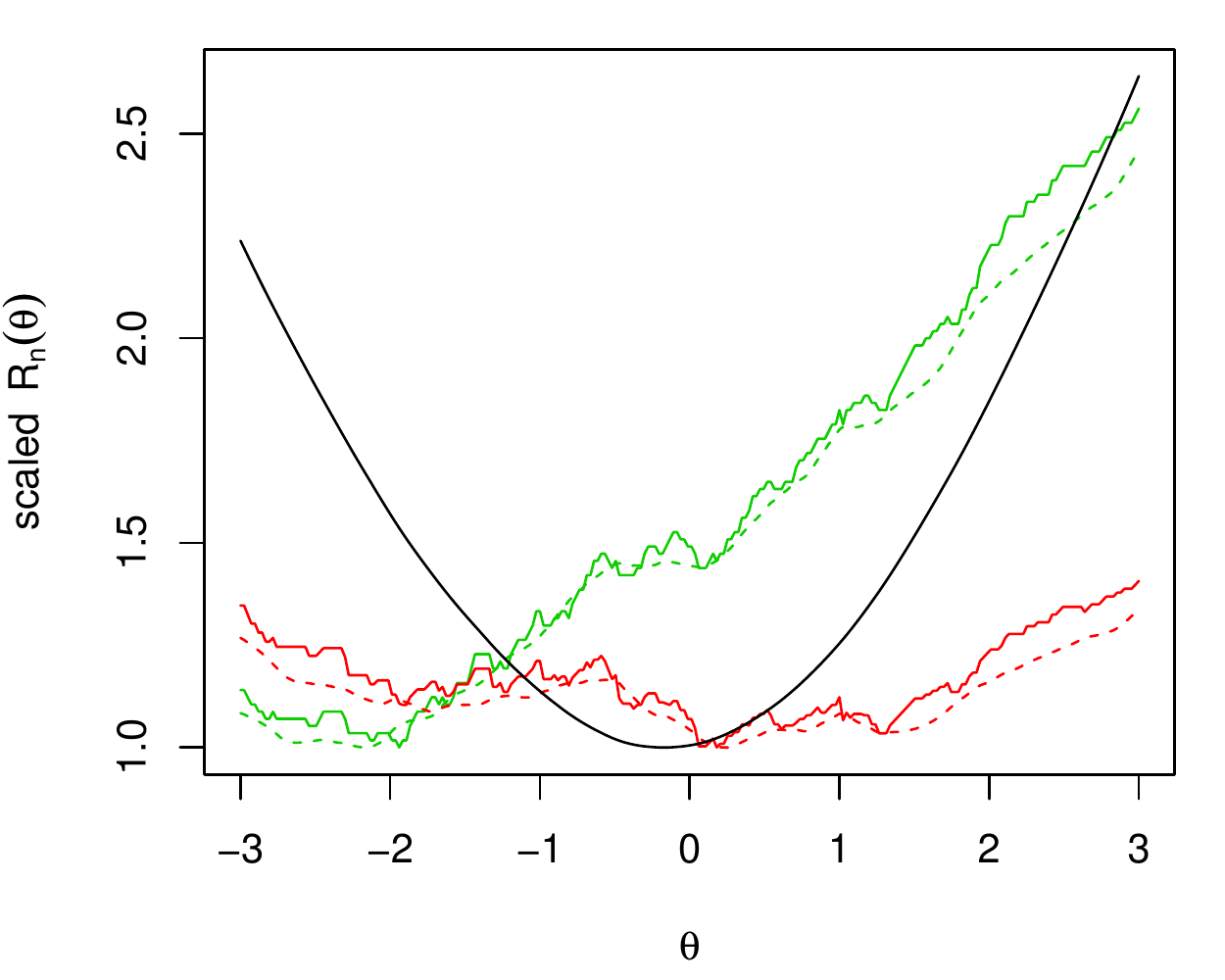}}}
\subfigure[$\tau = 1-\varpi = 0.5$]{\scalebox{0.6}{\includegraphics{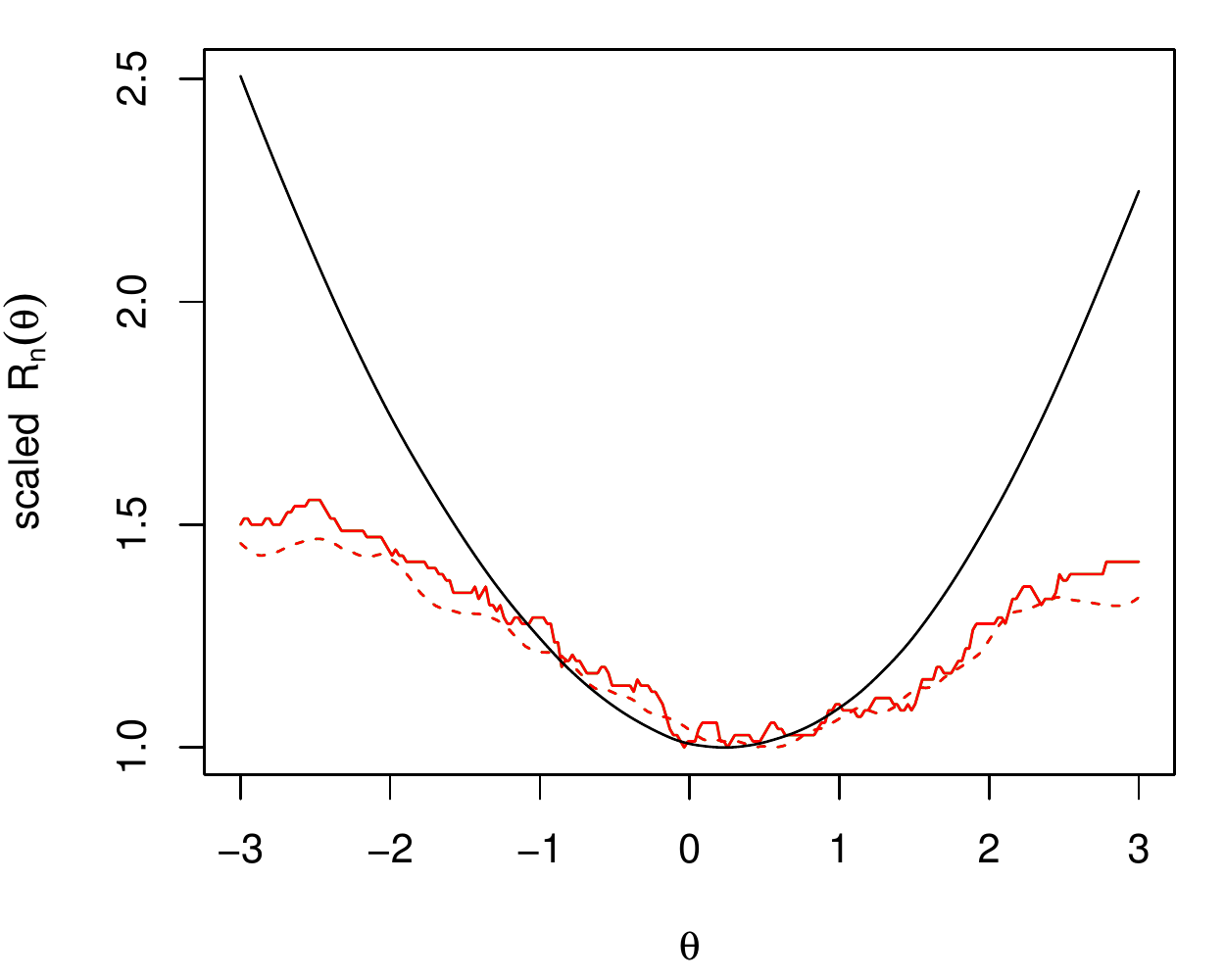}}}
\end{center}
\caption{Plots of the (scaled) empirical risk functions for the two different $\varpi$ values as described in the main text. Red, green, and black lines are the risks based on Zhou et al.'s loss, Hedayat et al.'s loss, and our BQR-based loss with $\eta=0.1$. Dashed lines correspond to the smoothed versions, with $\delta=0.25$.}
\label{fig:mcid.erisk.comp}
\end{figure}

One additional remark concerning Proposition~\ref{prop:consistent} is in order.  In particular, we only establish Fisher consistency when $\tau$ depends on the probability $\varpi := \prob(Y=+1)$.  But $\varpi$ is unknown in applications, so we cannot expect to have $\tau = 1-\varpi$ exactly.  We can, of course, get an empirical estimate $\hat\varpi$ based on the observed data, and then take $\tau = 1-\hat\varpi$.  Zhou et al.'s solution depends on $\varpi$ too, so they also recommend using a plug-in estimator.  When it is not possible to take $\tau = 1-\varpi$ exactly, the characterization of the solution in \eqref{eq:mess} can be helpful for investigating the bias this introduces.

\subsection{Latent variable formulation}
\label{SS:latent}

An advantage to the proposed binary quantile regression-based loss function is that it is smooth, as is clear from the plots presented in the previous subsection.  A further advantage is that the underlying binary quantile regression model admits a latent variable formulation which facilitates efficient computation.  In particular, this representation can be used to develop Markov chain Monte Carlo algorithms for (generalized) posterior sampling as is our focus here in the present MCID application.

The Bayesian quantile regression model was first developed and implemented in \cite{benoit2012binary} by assuming an asymmetric Laplace distribution latent variables and a Metropolis-within-Gibbs sampling algorithm for posterior inference. But these computations can still be expensive. The various computational challenges were overcome in \cite{mollica2017bayesian} by re-expressing the standard asymmetric Laplace distribution error as a mixture of exponential and Gaussian distributions.  That is, in our formulation, we have 
\[ U = x - \beta^\top z + \eta \eps, \]
where the error $\eps$ has an asymmetric Laplace distribution with skewness parameter $\tau$.  Mollica and Petrella observed that $\eps$ could be re-expressed as
\[ \eps = c_{1} V_{1} + c_{2} V_{1}^{1/2} V_{2}, \]
where the coefficients are $c_1=\frac{1-2\tau}{\tau(1-\tau)}$ and $c_2^2=\frac{2}{\tau(1-\tau)}$, and $V_1$ and $V_2$ are independent random variables with $V_{1} \sim \mathsf{Exp}(1)$ and $V_{2} \sim \mathsf{N}(0,1)$.  If it happens that $\tau=\frac12$, then $c_1=0$ and the above simplifies a bit.  

Then the key idea is as follows.  We can write a complete-data likelihood for $\beta$ and the latent variables $(U_i, V_{1i})$ given the observations $(X_i, Y_i, Z_i)$ as 
\begin{align*}
L_n(\beta, u, v_1) & \propto \prod_{i=1}^n \bigl[ \{(1+y_i) \, 1_{\{u_i\geq 0\}}+(1-y_i) \, 1_{\{u_i<0\}} \} \\
& \qquad \qquad \times \nm\bigl(u_i \mid \text{mean}_i(\eta), \text{var}_i(\eta) \bigr) \, {\sf Exp}(v_{1i} \mid 1) \bigr], 
\end{align*}
where, for $i=1,\ldots,n$, 
\begin{align*}
\text{mean}_i(\eta) & = (x_i-\beta^\top z_i)+\eta c_1 v_{1i} \\
\text{var}_i(\eta) & = \eta^2 c_2^2 v_{1i},
\end{align*}
and the dependence on $\eta$ and $\tau$ is mostly implicit in the notation.  Having this complete-data likelihood for the parameter and latent variables creates an opportunity to design an EM algorithm for maximizing the likelihood for $\beta$ in the working model, i.e., finding our proposed empirical risk minimizer, and to develop a Gibbs sampler for generalized posterior inference.  We focus on the latter in Section~\ref{S:general.post} below.


\section{A generalized posterior for the MCID}
\label{S:general.post}

\subsection{Definition}

We have shown that our surrogate loss function, based on a purposely misspecified binary quantile regression model, admits a risk function that, under certain conditions, roughly centers around the MCID-determining value $\beta^\star$ defined above. Here we present the details of our proposed generalized posterior for inference on the personalized MCID and, in particular, the Gibbs sampling algorithm we use to compute the posterior. 

To start, the definition of our generalized posterior distribution is simple to state.  Given a prior distribution for $\beta$ with density $\pi$ supported on $\RR^q$, define 
\[ \pi_n^{(\eta)}(\beta) \propto \exp\{-n \widehat R_\eta^\text{\sc bqr}(\beta)\} \, \pi(\beta), \quad \beta \in \RR^q, \]
where $\widehat R_\eta^\text{\sc bqr}$ is the $\eta$-dependent empirical risk function corresponding to the risk function $R_\eta^\text{\sc bqr}$ in \eqref{eq:R.bqr}, i.e., 
\[ \widehat R_\eta^\text{\sc bqr}(\beta) = \frac1n \sum_{i=1}^n \ell_\beta^\text{\sc bqr}(X_i, Y_i, Z_i), \]
where $\ell_\beta^\text{\sc bqr}$ is the loss defined in \eqref{eq:l.bqr}.  The dependence of $\pi_n^{(\eta)}$ on $\tau$ is implicit.  

This is different from the Gibbs posterior distribution formulation taken in \citet{syring2020gibbs} because here we are not directly using the loss function that defines the MCID, we have a surrogate loss based on a purposely-misspecified model.  Also, the role played by $\eta$ here is different than that of the learning rate parameter in \citet{syring2020gibbs}, though we will handle its data-driven selection in a similar manner; see Section~\ref{SS:pmcid.gpc} below.  The key point, the reason for considering this alternative formulation in the first place, is that the original loss functions do not admit generalized posterior distributions that can be efficiently sampled from.  Here we take advantage of the latent variable formulation discussed in Section~\ref{SS:latent} to design an efficient Gibbs sampler for generating realizations from the generalized posterior $\pi_n^{(\eta)}$ above.

\subsection{Gibbs sampler}
\label{SS:mcid.details}

The idea is to start with a posterior distribution determined by the complete-data likelihood presented in Section~\ref{SS:latent}, the one that involves both the parameter $\beta$ and the latent variables $(U_i, V_{1i})$, $i=1,\ldots,n$.  Then the generalized posterior for $\beta$ will drop out from the corresponding joint posterior for $(\beta,U,V_1)$ by the usual marginalization process.  

Here we suppose that $\beta$ has a $\nm(\mu_0, \Sigma_0)$ prior distribution.  Then the joint posterior density for $(\beta, V, V_1)$ is given by 
\begin{align*}
\pi_n^{(\eta)}(\beta, u, v_1) & \propto \exp\bigl\{-\tfrac{1}{2}(\beta -\mu_0)^\top \Sigma_0\Inv(\beta-\mu_0) \bigr\} \\
& \qquad \times \frac{1}{\prod_{i=1}^n\sqrt{v_{1i}}} \exp{\left\{-\frac{\sum_{i=1}^n (u_i-\text{mean}_i(\eta))^2}{2 \text{var}_i(\eta)}\right\}} \\
& \qquad \times\exp\Bigl(-\sum_{i=1}^nv_{1i} \Bigr) \prod_{i=1}^n \bigl[ (1+y_i) \, 1_{\{u_i\geq 0\}} + (1-y_i) \, 1_{\{u_i<0\}}\bigl].
\end{align*}
From this joint probability model for $(\beta, u, v)$, it is possible to write down the full conditionals and, in turn, develop a Gibbs sampling algorithm.  Indeed, following \cite{mollica2017bayesian}, we have the following full conditionals:
\begin{align*}
(u_i \mid v_{1i},x_i,y_i,\beta) & \sim \begin{cases}
    \mathsf{trN}_{[0,\infty)}(\text{mean}_i(\eta), \text{var}_i(\eta)) & y_i=+1\\
   \mathsf{trN}_{(-\infty,0)}(\text{mean}_i(\eta), \text{var}_i(\eta)) & y_i=-1,
  \end{cases} \\
(v_{1i} \mid u_i,x_i,y_i,\beta) & \sim \mathsf{GIG}(\tfrac12, \, \{u_i-(x_i-\beta^\top z_i)\}^2/\eta^2 c_2^2, \, 2+c_1^2/c_2^2), \\
(\beta \mid u_i,v_{1i},y_i,x_i) & \sim \mathsf{N}_p(\mu_n,\Sigma_n),
\end{align*}
where ${\sf trN}_A(m,s^2)$ denotes a normal distribution with mean $m$ and variance $s^2$ truncated to the interval $A$, 
\begin{align*}
\Sigma_n\Inv & =\eta^{-2} c_2^{-2}Z^\top \Lambda\Inv Z+\Sigma_0\Inv \\
\mu_n & = \Sigma_n\{\Sigma_0\Inv\mu_0-\eta^{-2} c_2^{-2} Z\Tra \Lambda\Inv(u-\eta c_1 v_1-X)\} \\
\Lambda & = \text{diag}(v_{11},\ldots, v_{1n}), 
\end{align*}
and $\mathsf{GIG}(\nu,a,b)$ denotes the generalized inverse Gaussian distribution \citep{jorgensen2012statistical} with density function given by 
\[ x \mapsto \frac{(b/a)^{\nu/2}}{2\kappa_\nu(\sqrt{ab})}x^{\nu-1}\exp{\Bigl(-\frac{ax\Inv+bx}{2}\Bigr)}, \quad x>0, \]
where $\kappa_\nu$ is the modified Bessel function \citep[e.g.,][]{watson1995treatise}.  Recall that $c_1$ and $c_2$ implicitly capture the generalized posterior distribution's dependence on $\tau$.  

Then the Gibbs sampler algorithm proceeds by iteratively sampling from these full conditionals, which results in a collection of samples 
\[ (\beta^{(m)}, u^{(m)}, v_1^{(m)}), \quad m=1,\ldots,M, \]
where $M$ is the designated Monte Carlo sample size.  Our interest is in the corresponding marginal posterior distribution for the MCID-determining coefficient vector $\beta$, so we simply ignore the $(u,v_1)$ components, leaving us with 
\begin{equation}
\label{eq:beta.m}
\{\beta^{(m)}: m=1,\ldots,M\} 
\end{equation}
as an approximate sample from the Gibbs posterior $\pi_n^{(\eta)}$.  With this we can proceed to make inference on $\beta$ or on the MCID $\beta^\top z_{\text{new}}$ for some relevant value $z_\text{new}$ of $Z$.  

\subsection{Calibration-focused choice of $\eta$}
\label{SS:pmcid.gpc}

As we showed above, our proposed binary quantile regression-based loss function, $\ell_\beta^\text{\sc bqr}$ leads to risk function that is minimized in roughly the right place.  However, we readily acknowledge that this is not a correctly specified model and, therefore, some appropriate tuning of the scale/learning rate parameter $\eta$ is necessary in order to have calibrated generalized Bayes posterior inferences.  Here we follow the {\em generalized posterior calibration} (GPC) strategy in \cite{syring2019calibrating} to tune the learning rate in such a way that calibration is achieved.  In this present application, the generalized posterior's dependence on the tuning parameter is more complicated than in previous applications of GPC \citep[e.g.,][]{wang.martin.auc,  gibbs.quantile, syring.martin.mcid, wu2020comparison}.  This highlights the GPC algorithm's versatility.  

Recall that we have data $D^n = (D_1,\ldots,D_n)$, where each $D_i = (X_i, Y_i, Z_i)$ consists of a diagnostic measure, a binary patient-reported outcome, and a patient profile/covariate, respectively.  Now suppose that we have another patient profile, $\tilde z$, and that the goal is inference on $\tilde\theta = \beta^\top \tilde z$, the MCID for a patient having profile $\tilde z$.  
Let $\widetilde\Pi_n^{(\eta)}$ denote the marginal posterior for $\tilde\theta=\beta^\top \tilde z$ based on the generalized posterior for $\beta$ and the given $\tilde z$.  That is, for the samples $\beta^{(m)}$ in \eqref{eq:beta.m}, we get 
\[ \tilde\theta^{(m)} = \beta^{(m)\top} \tilde z, \quad m=1,\ldots,M. \]
Of course, this distribution depends on $D^n$, $\eta$, and $\tilde z$, as well as the other prior inputs.  Fix a significance level $\alpha \in (0,1)$ and let $C_\alpha^{(\eta)}(D^n)$ denote the corresponding $100(1-\alpha)$\% posterior credible interval for $\tilde\theta$.  The goal of the GPC strategy is to choose $\eta$ such that the $100(1-\alpha)$\% generalized posterior credible interval is also (at least approximately) a $100(1-\alpha)$\% confidence interval.  That is, GPC aims to choose $\eta$ to solve the equation 
\begin{equation}
\label{eq:cvg}
c_\alpha(\eta) = 1-\alpha, 
\end{equation}
where 
\[ c_\alpha(\eta) = P\{C_\alpha^{(\eta)}(D^n) \ni \beta^{\star \top} z\} \]
is the coverage probability of the aforementioned credible intervals.  A relevant question is if a solution to the above equation even exists.  In previous applications of the GPC, it was clear that the generalized posterior was centering around an empirical risk minimizer that did not directly depend on the choice of $\eta$, so that adjusting $\eta$ largely only affected the spread of the posterior.  In such cases, it is easy to convince oneself that at least an approximate solution can be found.  In the present case, however, we technically cannot rule out the possibility that the minimizer of $R_\eta^\text{\sc bqr}$ depends on $\eta$, which suggests that $\eta$ might affect both the center and spread of our generalized posterior distribution.  As we show in Appendix~\ref{S:more}, however, in our experience, the posterior center depends only mildly on $\eta$, so we have every reason to expect that (a slight variation on) the GPC strategy can at least approximately solve the equation \eqref{eq:cvg}.  

Our variation on the GPC strategy proceeds as follows.  The idea is to bootstrap to approximate the coverage probability function $c_\alpha$, and then use a stochastic approximation scheme to solve \eqref{eq:cvg}.  For $b=1,\ldots,B$, let $D_b^n$ denote an iid sample of size $n$ from the empirical distribution of $D^n$, i.e., a sample with replacement from the data points in $D^n$. At a given $\eta$, the coverage probability can be approximated by 
\[ \hat c_\alpha(\eta) = \frac1B \sum_{b=1}^B 1\{ C_\alpha^{(\eta)}(D_b^n) \ni \hat\beta_\eta^\top \tilde z\}, \]
where $\hat\beta_\eta$ denotes the ($\eta$-dependent) generalized posterior mean based on the original data $D^n$.  To solve the aforementioned equation, this process needs to be repeated at several strategically-chosen $\eta$ values.  We use the Robbins--Monro stochastic approximation scheme to adaptively choose those $\eta$ values according to the updates 
\[ \eta_{t+1} = \eta_t + k_t \{ \hat c_\alpha(\eta_t) - (1-\alpha)\}, \quad t \geq 0, \]
where $\eta_0$ is some initial guess.  The stochastic approximation updates in this GPC algorithm are terminated when some pre-specified convergence criterion is met, at which point a value $\hat\eta$ is returned.  Then our inferences about the $\tilde z$-dependent MCID, $\tilde\theta=\beta^\top \tilde z$, are based on posterior samples from $\widetilde\Pi_n^{(\hat\eta)}$.  

Computationally, this can be expensive because each evaluation of the posterior credible interval requires a run of the data-augmentation-based Gibbs sampler described in Section~\ref{SS:mcid.details}. Indeed, the Gibbs sampler approach on binary quantile regression model requires generating $n$ latent variables, and the GPC algorithm requires these evaluations for each bootstrap sample and at each iteration $t$ in the Robbins--Monro updates. To save on computation time, our implementation uses the {\tt Rcpp} package in R. It is also possible to do the $B$-many posterior computations at a given $\eta$ in parallel.

\section{Numerical examples}
\label{SS:personalizedMCID_results}
\subsection{Models, methods, and metrics}

Here we consider two pairs of simulation examples.  Since our primary goal is to compare the performance of our method with that of Zhou et al., we use their same examples.  It happens that these examples all consider balanced cases, i.e., where $\varpi = \frac12$, so the target MCID is the ``MCID'' as defined by Zhou et al.~and by Hedayat et al.

The first pair of examples considered here are those used for illustration in \cite{zhou2021statistical}.  In both cases, the marginal distribution of $Y$ is given by 
\[ Y \sim 2 \, {\sf Ber}(0.5) - 1. \]
They differ in the specification of the marginal distribution for $Z$ and conditional distribution of $X$, given $(Y,Z)$.  

\begin{ex}
The marginal distribution for $Z$ is 
\[ Z \sim \nm(1, 0.1^2), \]
and the conditional distribution of $X$, given $(Y,Z)$ is 
\[ (X \mid Y,Z) \sim \begin{cases}
    \nm(a_0 + a_1 Z, 0.1^2)       & Y=+1\\
   \nm(b_0 + b_1 Z, 0.1^2)  & Y=-1,\\
  \end{cases}
\]
where $a=(a_0, a_1) = (0.1,0.55)$ and $b=(b_0, b_1) =(-0.1,0.45)$.  
\end{ex}

\begin{ex}
The marginal distribution for $Z$ is 
\[ Z \sim \nm_2(1_2, 0.1^2 \, I_2), \]
bivariate normal, and the conditional distribution of $X$, given $(Y,Z)$ is
\[ (X \mid Y,Z) \sim\begin{cases}
    \nm(a_0 + a_1 Z_1 + a_2 Z_2, 1)       & Y=+1\\
   \nm(b_0 + b_1 Z_1 + b_2 Z_2, 1)  & Y=-1,\\
  \end{cases}
\]
where $a=(a_0, a_1, a_2) = (0.05,0.55,1.05)$ and $b=(b_0, b_1, b_2) =(-0.05,0.49,0.95)$. 
\end{ex}

The next two examples are exactly those used for illustration in \cite{zhou2020interval}.  In both cases, the data $(X,Y,Z)$ has $Y$ marginal given by 
\[ (Y \mid X, Z) \sim 2 \, {\sf Ber}\{ F_{X|Z}(x \mid z)\} - 1, \]
where $F_{X|Z}$ denotes the conditional distribution function of $X$, given $Z$.  The marginal distribution for $Z$ is bivariate normal, $Z \sim \nm_2(0_2, I_2)$. What differs between the two examples is $F_{X|Z}$.

\begin{ex}
The conditional distribution of $X$, given $Z$ is  
\[ (X \mid Z) \sim \nm(\beta_0 + \beta_1 Z_1 + \beta_2 Z_2, 1)\]
where $a=(a_0, a_1, a_2) = (0,1,2)$.
\end{ex}

\begin{ex}
The conditional distribution of $X$, given $Z$ is  
\[ (X \mid Z) \sim \nm(\beta_0 + \beta_1 Z_1 + \beta_2 Z_2- \beta_1 Z^2_1 - \beta_2 Z^2_2, 1)\]
where $(a_0, a_1, a_2) = (0,1,2)$.
\end{ex}

Suppose that we have a new patient profile $\tilde z$, the true MCID in the four examples is $\beta^{\star \top} \tilde z$. Our goal here is to investigate the performance of different methods on estimating this MCID.  Here we compare our proposed method---generalized posterior based on binary quantile regression + GPC---with the method proposed in \cite{zhou2021statistical}. 

The metrics we base these comparisons on are 
\begin{itemize}
\item coverage probability of the 95\% credible interval for the new patient's MCID;
\vspace{-2mm}
\item bias of the new patient's MCID;
\vspace{-2mm}
\item mean square error of the new patient's MCID.
\end{itemize} 
The metrics are evaluated by averaging across $R=250$ replications, each with sample size $n=200$. For the \cite{zhou2021statistical}, we follow the recommendations in the {\tt MCID} package in R \citep{R.mcid} 
and set the grid points for $\lambda=10^{(-3+0.1i)}$, for $i=1,\dots,60$, and $\delta \in \{0.1,0.2,0.3\}$. Zhou et al.'s tuning parameters, $\lambda$ and $\delta$, are selected using cross-validation.  For our proposed binary quantile regression + GPC method, we proceed by introducing a Gaussian prior for $\beta$, which is $\nm_p(0,I_p)$. Our expectation is that Zhou et al.'s method will be able to estimate the MCID parameter with small bias, but might not have a satisfactory coverage probability. 

\subsection{Results}

A first question is: what should we expect $\eta$ to be?  Since this is a simulation study, we can simulate from the true data-generating process to find an ``oracle'' learning rate value, $\eta^\star$, i.e., the value that would make the true coverage probability equal to $1-\alpha=0.95$.  We performed this simulation for Examples~1--4 and the results are displayed in Figure~\ref{fig:MCID_lr}.  The plot shows (a Monte Carlo estimate of) the coverage probability of our 95\% generalized posterior credible intervals as a function of $\eta$.  In all four cases, the coverage probability is increasing in $\eta$, which is to be expected.  We also see that there exists an $\eta^\star$ value---where the black and (solid) red lines intersect---that makes the coverage probability roughly 0.95 in each case.  So our hope is that GPC will select a learning rate value that tends to be near these oracle values in the respective examples; if so, then we would expect the generalized posterior distribution for the MCID, based on our proposed BQR + GPC method, to be approximately calibrated.   The $(x,y)$ coordinates of the red ``X'' in these plots represents the average $\hat\eta$ value chosen by GPC and the empirical coverage probability across replications.  This demonstrates that GPC is selecting $\hat\eta$ values sufficiently near the oracle $\eta^\star$, and that our corresponding generalized posterior inference for the MCID is calibrated, or at least approximately so.   

\begin{figure}[t]
\begin{center}
\subfigure[Example 1]{\scalebox{0.5}{\includegraphics{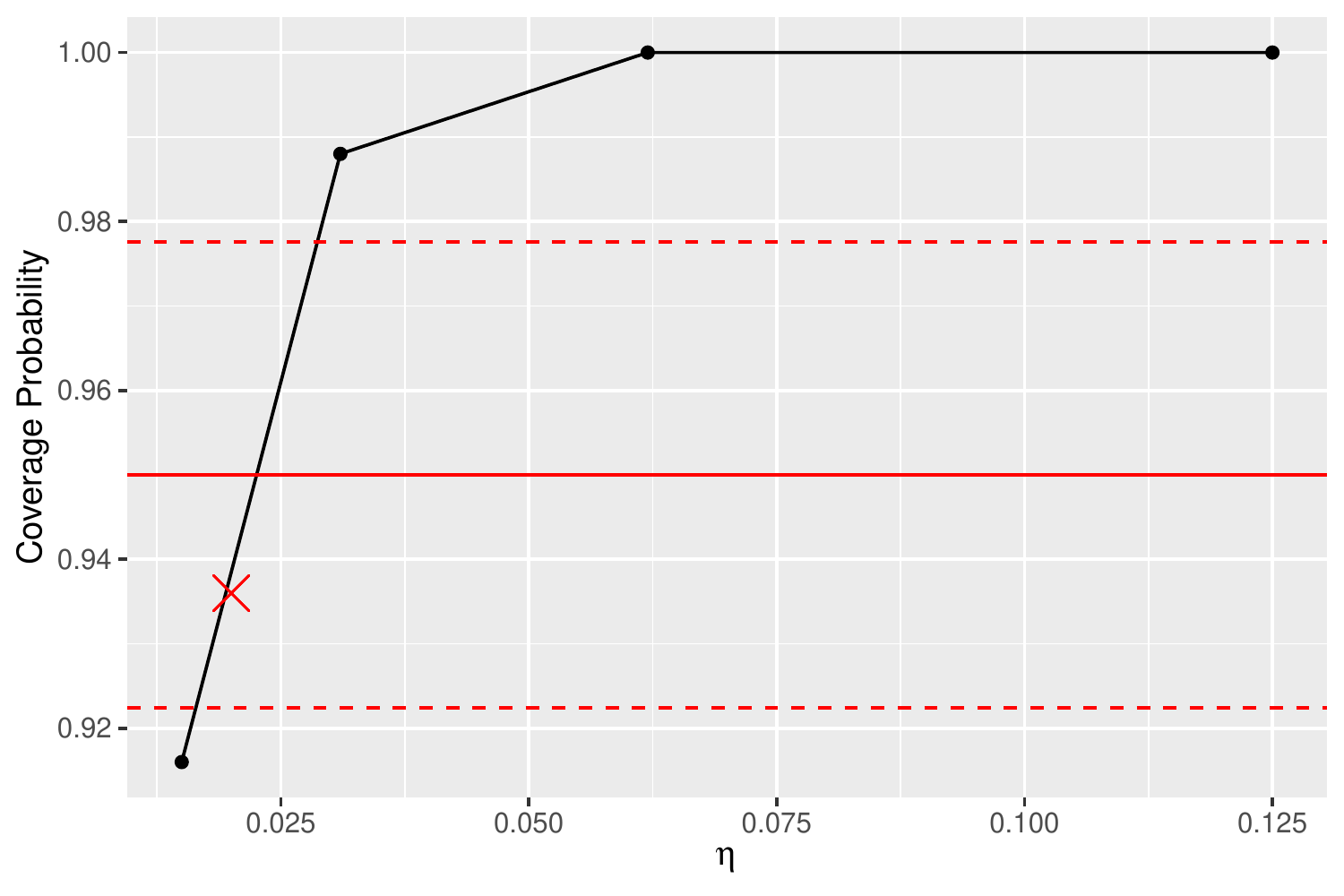}}}
\subfigure[Example 2]{\scalebox{0.5}{\includegraphics{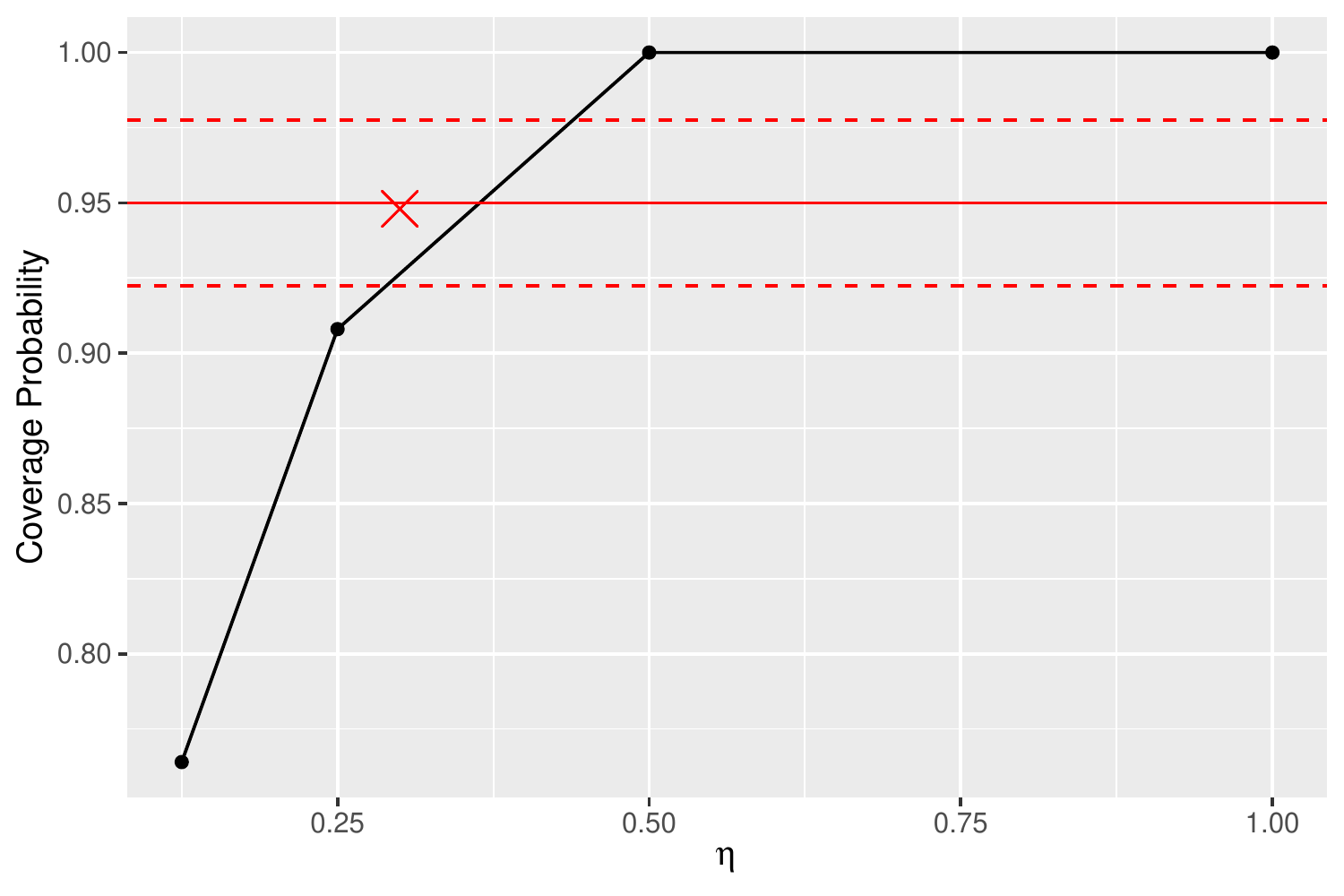}}}
\subfigure[Example 3]{\scalebox{0.5}{\includegraphics{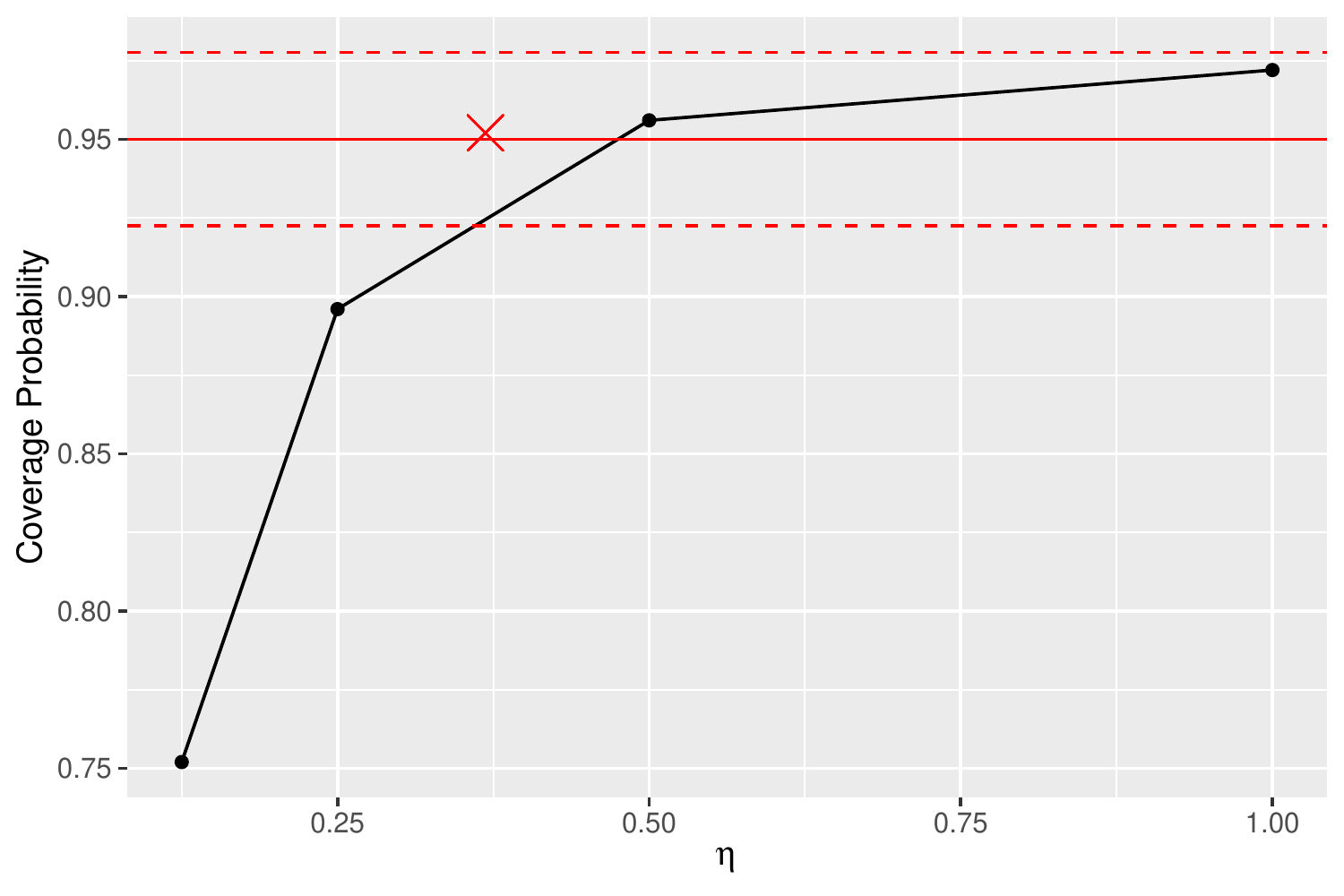}}}
\subfigure[Example 4]{\scalebox{0.5}{\includegraphics{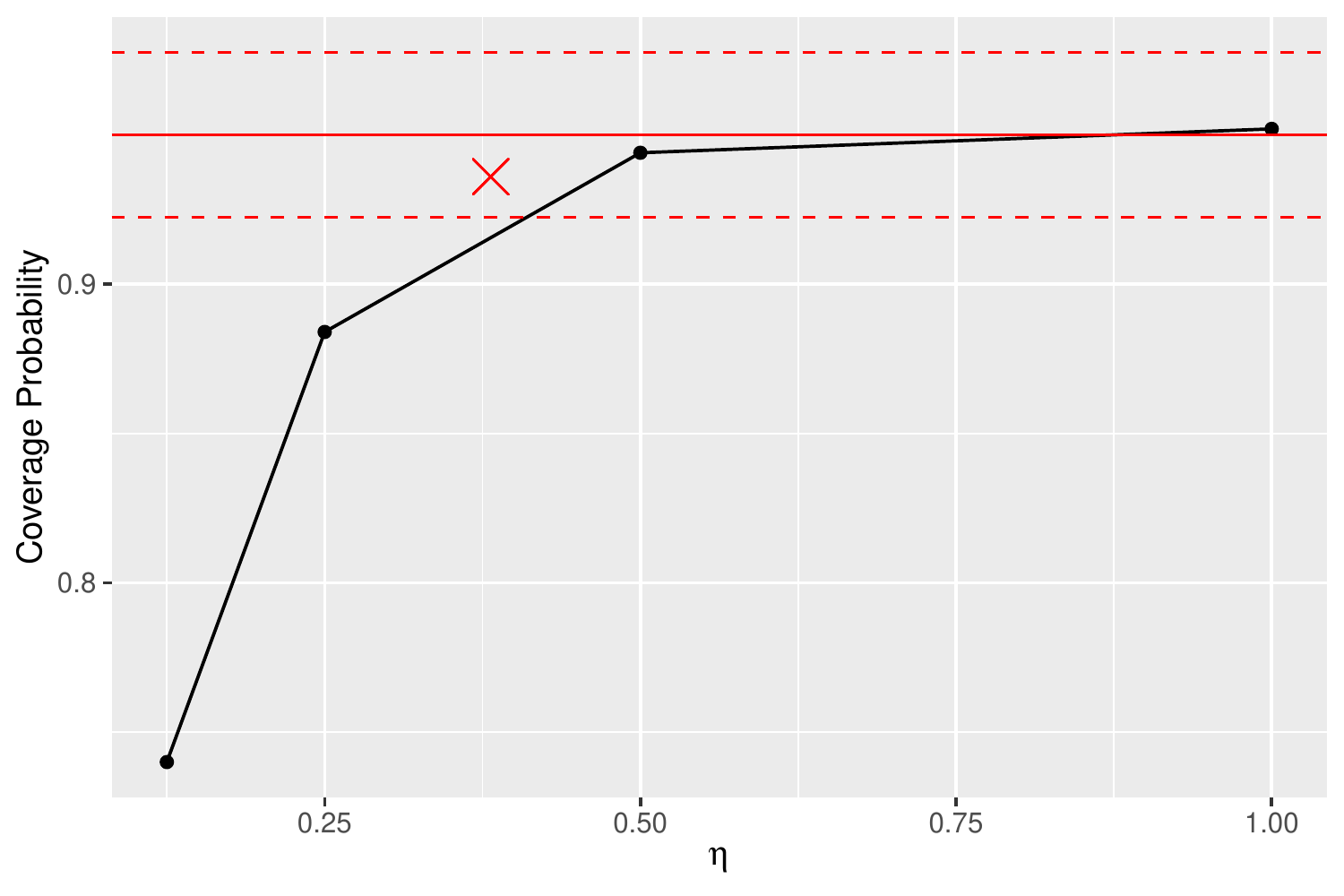}}}
\end{center}
\caption{Coverage probability of the generalized posterior credible interval for the MCID in the four examples as a function of $\eta$. Solid horizontal lines in both plots represent the target 95\% coverage probability, and the dashed line represent the two Monte Carlo standard errors around the target 0.95. The red X corresponds to the average $\hat\eta$ value and the empirical coverage probability attained by our  BQR + GPC credible intervals.}
\label{fig:MCID_lr}
\end{figure}


As we see in Table~\ref{table:MCID_perosnal}, the coverage probability of BQR + GPC is, indeed, right on the advertised level. The asymptotic confidence interval proposed in \cite{zhou2021statistical}, on the other hand, is far less reliable.  Only in Example~1 does it achieve the nominal coverage probability---in the other examples, it falls well short of the 0.95 target.  The bias of the two estimator is comparable in the first two examples, but Zhou et al.'s estimator is much more severely biased in the second two examples, which is sure to be a major contributor to the under-coverage of the corresponding confidence intervals.  Overall, it is clear that our proposed BQR + GPC-based generalized posterior inference is superior in these four examples.

\begin{table}[t]
\begin{center}
\begin{tabular}{ccccccc}
 \toprule
Example & Method & $\eta$ & Coverage & Interval length & Bias & MSE \\
\hline
1& \cite{zhou2021statistical} & -- & 0.97& 0.093 (0.101) & 0.0013 & 0.0002\\
& BQR + GPC &  0.02  & 0.94 & 0.041 (0.008) & 0.0007 & 0.0001 \\
\hline
2& \cite{zhou2021statistical} & -- & 0.51  & 0.185 (0.279)& 0.0073 & 0.0479\\
& BQR + GPC & 0.30 & 0.95 & 0.402 (0.055) & 0.0155 & 0.0094\\
\hline
3 & \cite{zhou2021statistical} & -- & 0.41& 0.259 (0.281) & 0.130 & 0.080\\
& BQR + GPC &  0.37  & 0.95 & 0.794 (0.139) & 0.034 & 0.044 \\
\hline
4& \cite{zhou2021statistical} & -- & 0.34 & 0.179 (0.290) & 0.206 & 0.545\\
& BQR + GPC & 0.38 & 0.94 & 0.924 (0.206) & 0.06 & 0.079\\
  \bottomrule 
\end{tabular}
\end{center}
\caption{Comparison of the method in \cite{zhou2021statistical} with our proposed method (BQR + GPC) in terms of coverage probability and mean length of the 95\% interval estimates for the MCID, as well as bias and mean square error (MSE), for Examples~1--4 described in the text, based on sample size $n=200$ and 250 replications.}
\label{table:MCID_perosnal}
\end{table}

\section{Conclusion}
\label{S:discuss}

In this paper we proposed a new strategy for inference on the personalized MCID, use a generalized posterior distribution based on a purposely-misspecified binary quantile regression model.  This formulation is advantageous for at least two reasons.  First, like the various existing strategies in the literature, it appropriately smooths the (empirical version) of the expected loss that defines the MCID.  Second, this smoothing is done in such a way that it admits a convenient data-augmentation-based posterior sampling algorithm.  This model formulation depends on a parameter $\eta$, which, despite needing to be handled carefully, turns out to be a blessing.  Indeed, it is through careful tuning of this $\eta$ scale parameter, through the proposed GPC strategy, that we can ensure the generalized posterior credible intervals are reliable in a frequentist sense.  As our simulation experiments make clear, the personalized MCID problem is challenging, so achieving even approximately valid uncertainty quantification is apparently no small feat.   



There remains some interesting open questions that deserve further investigation.  First, while the proposed method has strong empirical performance, there is still no formal proof that the GPC method provides valid uncertainty quantification.  

Second, asymptotic properties of our generalized posterior would also be interesting to explore.  The asymptotics for the original binary quantile regression method proposed by Manski are non-standard and highly non-trivial \citep{manski1985semiparametric,kim1990cube,benoit2012binary,mollica2017bayesian} and, while smoothing the objective function can tame some of that, the effect of even some basic smoothing like in \citep{zhou2020interval} is not yet fully understood in this application.  It would be interesting to see if our strategy of smoothing via a purposely-misspecified model would have any general advantages when it comes to asymptotic theory. 

Finally, it would be of interest to see if/how our proposed method for solving the MCID problem could be extended to higher dimensions. Bayesian lasso binary quantile regression in \cite{benoit2013bayesian}, together with GPC, is one reasonable idea. That is, can a suitable choice of $\eta$ ensure the binary quantile regression with a suitable multivariate Laplace prior on $\beta$ ensure that credible interval of MCID achieves the nominal coverage probability? In the higher dimensional MCID problem, our proposed strategy would surely require paralleling the bootstrap replicates in the GPC algorithm, along with perhaps other efficiency-focused adjustments.

\section*{Acknowledgments}

This work is partially supported by the U.S.~National Science Foundation, grants DMS--1811802 and SES--2051225.

\appendix

\section{Appendix}

\subsection{Proof of Proposition~\ref{prop:consistent}}
\label{S:proof}

Recall that the goal is to characterize the minimizer of the BQR-based risk function $\beta \mapsto R_\eta^\text{\sc bqr}(\beta)$, which is defined as 
\[ R_\eta^\text{\sc bqr}(\beta) = \E\{ \ell_\beta^\text{\sc bqr}(X,Y,Z) \}, \]
where 
\begin{align*}
\ell_\beta^\text{\sc bqr}(x,y,z) & = -\log h_\beta(x,y,z) \\
& = \tfrac12\bigl[ (1 + y) \log g_\beta(x,z) + (1 - y) \log\{1 - g_\beta(x,z)\} \bigr] + \log p(x,z), 
\end{align*}
with the form of $g_\beta$---including the implicit dependence on $\eta$ and on $\tau$---given explicitly in \eqref{eq:bqr}.  Since our goal is to minimize this risk function with respect to $\beta$, and since the joint density $p(x,z)$ of $(X,Z)$ under $\prob$ does not depend on $\beta$, we henceforth ignore this term in the loss function.  After dropping this additive constant, our goal is to minimize 
\[  R_\eta^\text{\sc bqr}(\beta) = -\{ A(\beta) + B(\beta) + C(\beta) + D(\beta)\}, \]
where 
\begin{align*}
A(\beta) & = \int_\ZZ \int_{-\infty}^{\beta^\top z} \{\log(1-\tau) - \tau (\beta^\top z-x)/\eta\} \, m_z(x) \, p(x,z) \, dx \, dz \\
B(\beta) & = \int_\ZZ \int_{\beta^\top z}^\infty \log\{1 - \tau e^{-(1-\tau) (x-\beta^\top z)/\eta}\} \, m_z(x) \, p(x,z) \, dx \, dz \\
C(\beta) & = \int_\ZZ \int_{-\infty}^{\beta^\top z} \log\{1 - (1-\tau) e^{-\tau(\beta^\top z-x)/\eta}\} \, \{1-m_z(x)\} \, p(x,z) \, dx \, dz\\
D(\beta) & = \int_\ZZ \int_{\beta^\top z}^\infty \{\log\tau - (1-\tau)(x-\beta^\top z)/\eta\} \, \{1-m_z(x)\} \, p(x,z) \, dx \, dz. 
\end{align*}
It is straightforward, albeit tedious, to differentiate each of these terms with respect to $\beta$ (using Leibniz's integral rule), and we obtain
\begin{align*}
\dot A(\beta) & = \log(1-\tau) \int_\ZZ z \, m_z(\beta^\top z) \, p(\beta^\top z, z) \, dz \\
& \qquad -\frac{\tau}{\eta} \int_\ZZ z \, \int_{-\infty}^{\beta^\top z} m_z(x) \, p(x,z) \, dx \, dz \\ 
\dot B(\beta) & = -\log(1 - \tau) \int_\ZZ z \, m_z(\beta^\top z) \, p(\beta^\top z, z) \, dz \\
& \qquad - \frac{1-\tau}{\eta} \int_\ZZ z \, \int_{\beta^\top z}^\infty \frac{\tau e^{-(1-\tau)(x - \beta^\top z)/\eta}}{1 - \tau e^{-(1-\tau)(x - \beta^\top z)/\eta}} \, m_z(x) \, p(x,z) \, dx \, dz, 
\end{align*}
and, similarly, 
\begin{align*}
\dot C(\beta) & = \log\tau \int_\ZZ z \, \{1-m_z(\beta^\top z)\} \, p(\beta^\top z, z) \, dz \\
& \qquad + \frac{\tau}{\eta} \int_\ZZ z \, \int_{-\infty}^{\beta^\top z} \frac{(1-\tau) e^{-\tau(\beta^\top z - x)/\eta}}{1 - (1-\tau) e^{-\tau(x - \beta^\top z)/\eta}} \, \{1-m_z(x)\} \, p(x,z) \, dx \, dz \\
\dot D(\beta) & = -\log\tau \int_\ZZ z \, \{1 - m_z(\beta^\top z)\} \, p(\beta^\top z, z) \, dz \\
& \qquad + \frac{1-\tau}{\eta} \int_\ZZ z \, \int_{\beta^\top z}^\infty \{1 - m_z(x)\} \, p(x,z) \, dx \, dz. 
\end{align*}
Note that when we sum these four derivatives, the leading terms---those that only involve integration over $\ZZ$---cancel out, providing a little bit of simplification.  From here, it is not difficult to see that 
\begin{equation}
\label{eq:eq}
\dot R_\eta^\text{\sc bqr}(\beta) = 0 \iff \text{LHS} = \text{RHS}(\eta),
\end{equation}
where 
\begin{align*}
\text{LHS} & = (1-\tau) (1-\varpi) \int_\ZZ z \, \int_{\beta^\top z}^\infty \psi_{-1,z}(x) \, dx \, dz - \tau \varpi \int_\ZZ z \, \int_{-\infty}^{\beta^\top z} \psi_{+1,z}(x) \, dx \, dz \\
\text{RHS}(\eta) & = \tau(1-\varpi) \int_\ZZ z \, \int_{-\infty}^{\beta^\top z} \frac{(1-\tau) e^{-\tau(\beta^\top z - x)/\eta}}{1 - (1-\tau) e^{-\tau(x - \beta^\top z)/\eta}} \, \psi_{-1,z}(x) \, dx \, dz \\
& \qquad - (1-\tau) \varpi \int_\ZZ z \, \int_{\beta^\top z}^\infty \frac{\tau e^{-(1-\tau)(x - \beta^\top z)/\eta}}{1 - \tau e^{-(1-\tau)(x - \beta^\top z)/\eta}} \, \psi_{+1,z}(x) \, dx \, dz,
\end{align*}
and where Bayes's formula has been used to introduce a joint distribution of $(X,Y)$, given $Z=z$, of the form $\varpi \psi_{+1,z}(x)$ and $(1-\varpi) \psi_{-1,z}(x)$.  

It is not at all obvious from looking at the above expressions, but it turns out that the risk function is convex.  Specifically, this implies that if we can find a $\beta$ that solves the equation $\text{LHS} = \text{RHS}(\eta)$, then that is the only such solution.  That the risk function is convex might be expected based on the plots shown in Figure~\ref{fig:mcid.risk.comp}.  A detailed proof requires some lengthy expressions, but the idea is clear so we only give a sketch.  Start with 
\[ -\dot A(\beta) = \frac{\varpi \tau}{\eta} \int_\ZZ z \, \int_{-\infty}^{\beta^\top z} \psi_{+1,z}(x) \, dx \, dz + (\text{cancels}). \]
The parenthetic term at the end can be ignored because it gets canceled when the four derivatives are summed up.  The derivative of the above expression with respect to $\beta$ is 
\[ \frac{\varpi \tau}{\eta} \int_\ZZ z z^\top \, \psi_{+1,z}(\beta^\top z) \, dz. \]
Next, look at 
\[ -\dot B(\beta) = \frac{\varpi(1-\tau)}{\eta} \int_\ZZ z \, \int_{\beta^\top z}^\infty \frac{\tau e^{-(1-\tau)(x - \beta^\top z)/\eta}}{1 - \tau e^{-(1-\tau)(x - \beta^\top z)/\eta}} \, \psi_{+1,z}(x) \, dx \, dz + (\text{cancels}). \]
The derivative of this expression with respect to $\beta$ is 
\[ \frac{\varpi(1-\tau)}{\eta} \int_\ZZ z z^\top \, \psi_{+1,z}(\beta^\top z) \, dz + \frac{\varpi(1-\tau)}{\eta} \int_\ZZ z \, \int_{\beta^\top z}^\infty \bigl\{ \cdots \} \, \psi_{+1,z}(x) \, dx \, dz, \]
where the term in curly braces is the derivative of the ratio in the previous display with respect to $\beta$.  On the stated range of integration, that ratio is an increasing function of $\beta$, so the derivative is a positive function times $z^\top$.  Putting these two pieces together
\begin{align*}
-\ddot A(\beta) - \ddot B(\beta) & = \frac{\omega}{\eta} \int_\ZZ z z^\top \, \psi_{+1,z}(\beta^\top z) \, dz \\
& \qquad + \frac{\omega(1-\tau)}{\eta} \int_\ZZ zz^\top \int_{\beta^\top z}^\infty \{\text{positive function}\} \, \psi_{+1,z}(x) \, dx \, dz, 
\end{align*}
which is positive definite.  The same thing happens when we consider $\dot C(\beta)$ and $\dot D(\beta)$, so we conclude that the second derivative of the risk is positive definite, hence the convexity---and uniqueness of the solution---claim follows.  

Back to solving the equation in \eqref{eq:eq}, as the notation above indicates, only the right-hand side term depends on $\eta$, and the claim is that $\text{RHS}(\eta) \to 0$ as $\eta \to 0$.  Showing this will prove the first claim of Proposition~\ref{prop:consistent}.  The key point is that those two ratios involving exponential terms are bounded functions of $\eta$ and both converge pointwise to 0 as $\eta \to 0$.  Then it follows from the dominated convergence theorem that $\text{RHS}(\eta) \to 0$ as $\eta \to 0$.  Therefore, the risk minimizer is Fisher consistent for the (unique) solution to the equation \eqref{eq:mess}, which proves the first claim of the proposition.  

For the proposition's second claim, when $\tau=1-\varpi$, the expression in \eqref{eq:mess} simplifies to 
\[ \int_\ZZ z \int_{-\infty}^{\beta^\top z} \psi_{+1, z}(x) \, dx \, dz = \int_\ZZ z \int_{\beta^\top z}^\infty \psi_{-1,z}(x) \, dx \, dz. \]
If Assumption~S holds, then it is easy to see that 
\[ \int_{-\infty}^{\beta^{\star \top}z} \psi_{+1,z}(x) \, dx = \int_{\beta^{\star\top} z}^\infty \psi_{-1,z}(x) \, dx, \quad \text{for all $z$}, \]
where $\beta^\star = \frac12(a_{+1} + a_{-1})$ is the MCID-determining coefficient presented in \citet{zhou2021statistical}.  This establishes the claimed asymptotic Fisher consistency of our proposed risk minimizer at Zhao et al.'s MCID.  

Finally, when $\tau=1-\varpi=\frac12$, we have that $\text{RHS}(\eta)$ can be simplified as 
\[ \frac{1}{4\eta} \int_\ZZ z \int_{-\infty}^\infty \frac{\frac12 e^{-|x - \beta^\top z|/2\eta}}{1-\frac12 e^{-|x - \beta^\top z|/2\eta}} \, \bigl\{\psi_{-1,z}(x) \, 1_{x \leq \beta^\top z} - \psi_{+1,z}(x) \, 1_{x \geq \beta^\top z} \bigr\} \, dx \, dz. \]
By the symmetry in Assumption~S, along with the symmetry of that ratio inside the integral, it follows that the inside integral vanishes when $\beta=\beta^\star$ for all $z$ and for all $\eta$.  Therefore, in this fully symmetric case, $\beta^\star$ is the minimizer of $R_\eta^\text{\sc bqr}(\beta)$ for all $\eta > 0$.

\subsection{Additional simulation results}
\label{S:more}

In the so-called balanced case, where $\varpi=\frac12$, Proposition~\ref{prop:consistent} states that the minimizer of the risk function $R_\eta^\text{\sc bqr}$ is the true MCID for all $\eta > 0$.  In the unbalanced case, i.e., $\varpi \neq \frac12$, we can conclude that the risk minimizer is the true MCID only for {\em sufficiently small $\eta > 0$}, and for $\eta$ that are not ``sufficiently small,'' the shape of that risk function would be different.  That the risk function's shape depends on $\eta$ implies that, when we use the empirical version of that risk to construct our generalized posterior distribution, both the center and the spread may depend on $\eta$ in a potentially non-trivial way.  And if the center of the generalized posterior varies with $\eta$, then a slight modification to the GPC algorithm, as first presented in \cite{syring2019calibrating}, is needed.  We discussed that slight modification in Section~\ref{SS:pmcid.gpc} and its justification was based on the claim that the generalized posterior center's dependence on $\eta$ is rather mild, so our proposed modification is no more than a superficial change and should not impact the strong performance of GPC as demonstrated elsewhere.  Here we provide some numerical evidence to justify that aforementioned claim.  

Here we revisit the population MCID example in Section~\ref{SS:fisher}, and carry out a simulation study 
to demonstrate that the center of the generalized posterior distribution, as measured by the corresponding posterior mean, depends very mildly on the scalar $\eta$.  Our choice to focus on the population MCID case is to make it easy to draw plots to justify our claims.  Consider an imbalanced case where the joint distribution of $(X,Y)$ is described hierarchically as
\begin{align*}
Y & \sim 2 \, {\sf Ber}(0.7) - 1, \\
(X \mid Y=y) & \sim \nm(y, 2^2), \qquad y = \pm 1.
\end{align*}
As explained in Section~\ref{SS:fisher}, the true (population) MCID being estimated is $\theta^\star=0$.  Throughout the following investigations, each sample consists of $n=200$ pairs of $(X,Y)$ generated from the above joint distribution, and this process is replicated 250 times.  

First, we want to investigate the sensitivity of the generalized posterior distribution's center to the choice of $\eta$.  Figure~\ref{f:populationMCID_posttheta} shows histograms of the 250 generalized posterior means for three different $\eta$ values, namely, $\eta=1$, 0.25, and 0.06, from left to right.  When $\eta$ is relatively large, it is clear that there is some degree of bias in the posterior mean; however, as Proposition~\ref{prop:consistent} predicts, when $\eta$ is small, the generalized posterior mean zeros in on the true MCID.  So, as long as $\eta$ is relatively small, the bias induces by the dependence on $\eta$ is rather mild, as we claimed.  Second, it is of interest to consider what $\eta$ value the GPC algorithm is targeting.  Figure~\ref{f:populationMCID_coverage_hist}(a) plots the coverage probability of the 95\% generalized posterior credible interval for $\theta^\star$ as a function of $\eta$.  Clearly, the coverage probability is of the nominal level when $\eta$ is in the vicinity of 0.5 or 0.6.  Figure~\ref{f:populationMCID_coverage_hist}(b) plots a histogram of the $\hat\eta$ values that were actually selected by (our slight variation on) the GPC algorithm over the 250 replications and, indeed, these concentrate in the 0.4--0.6 range, around 0.47 on average.  Therefore, we conclude that the coverage probability realized by the proposed BQR + GPC generalized posterior credible intervals should be approximately equal to the nominal level, as confirmed in the simulation results presented in Section~\ref{SS:personalizedMCID_results} above. 



\begin{figure}[t]
    \centering
    \includegraphics[scale=0.4]{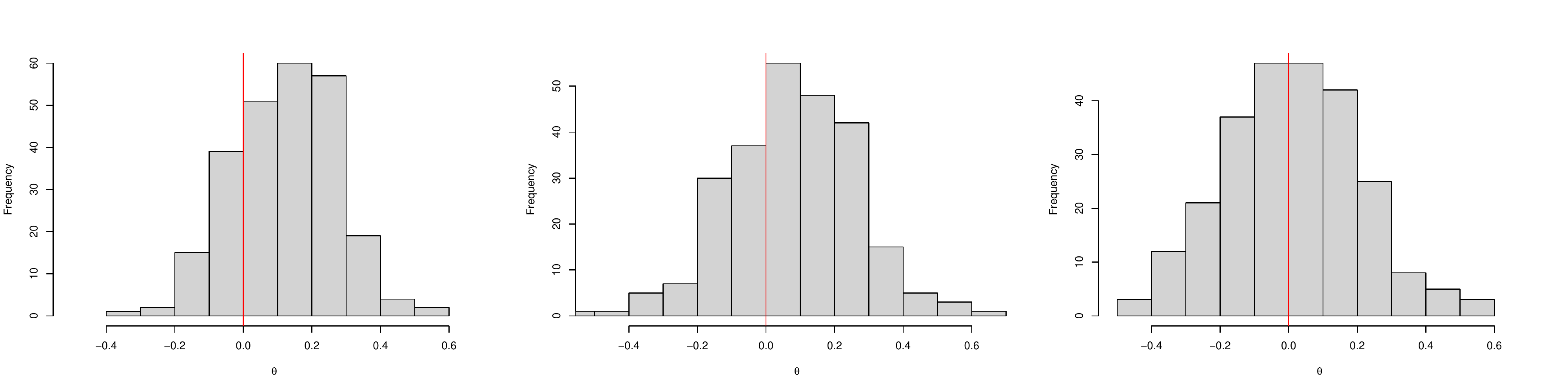}
    \caption{Histogram of the posterior mean over 250 replicates. From left to right $\eta=(1.0,0.25,0.06)$, and the red lines represent the true population MCID.}
    \label{f:populationMCID_posttheta}
\end{figure}

\begin{figure}[t]
\begin{center}
\subfigure[]{\scalebox{0.5}{\includegraphics{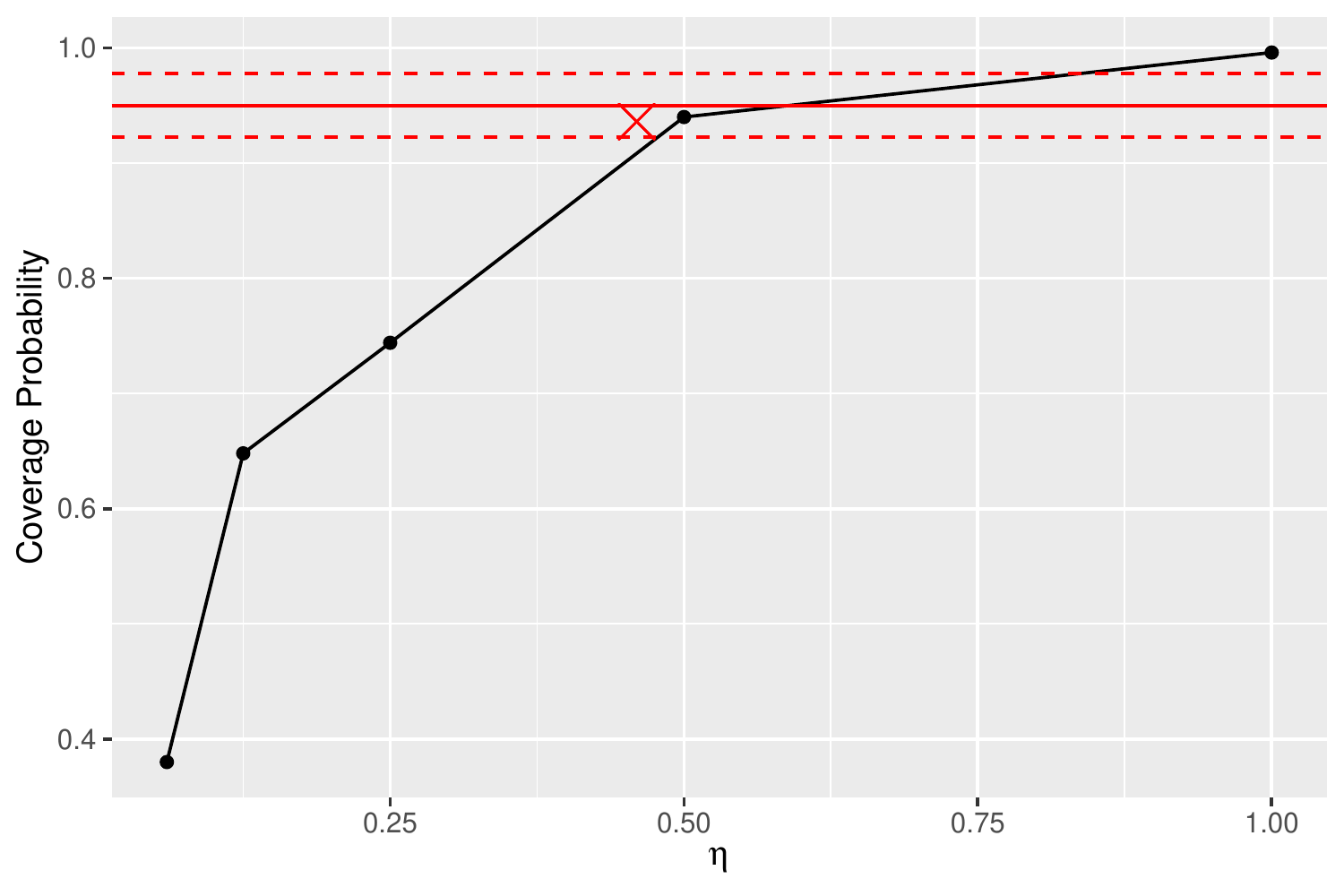}}}
\subfigure[]{\scalebox{0.5}{\includegraphics{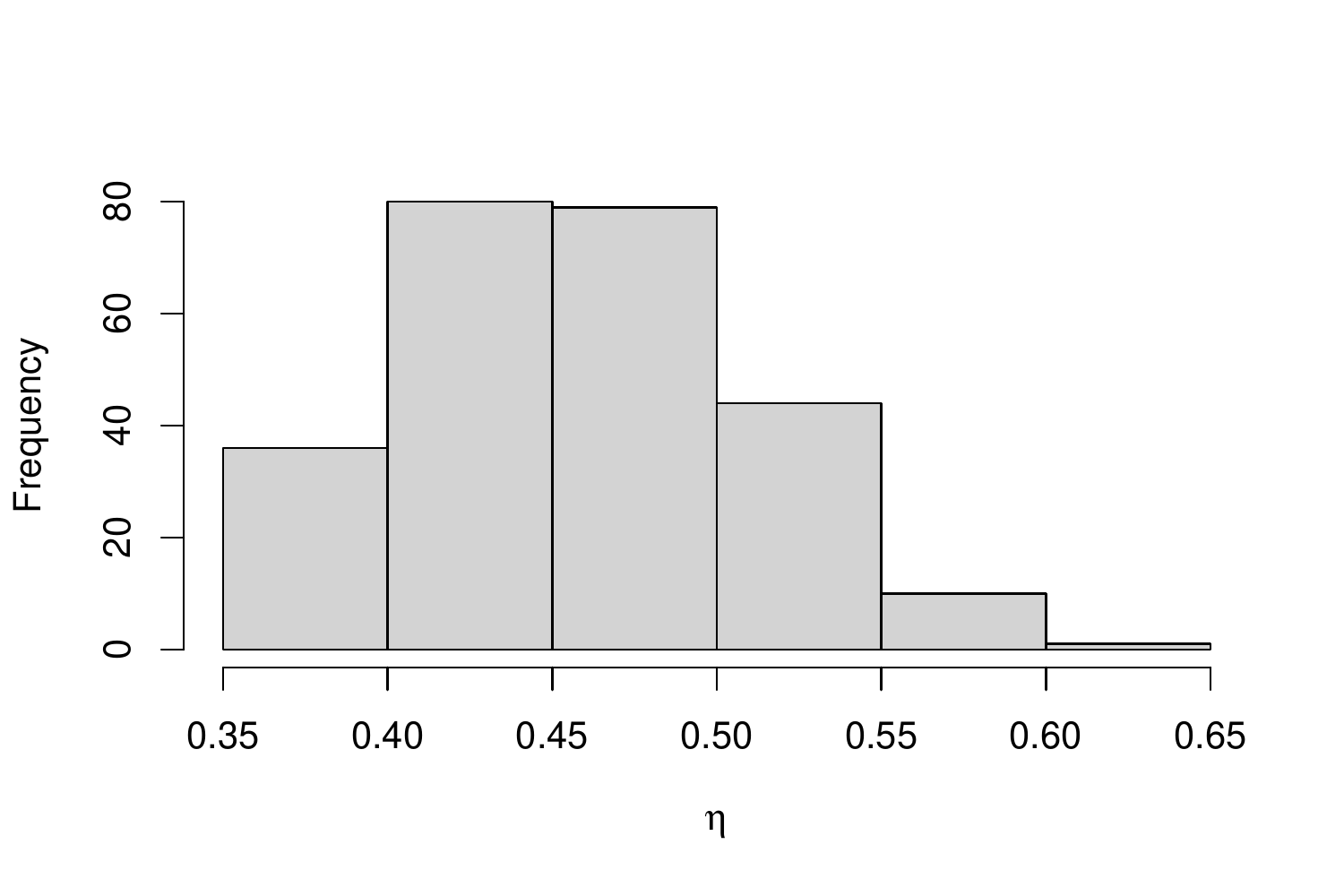}}}
\end{center}
\caption{Panel (a): Coverage probability of the 95\% generalized posterior credible interval for $\theta^\star$ as a function of $\eta$, with tolerable Monte Carlo error bounds around the target 0.95; the red X corresponds to the average $\hat\eta$ value and the empirical coverage probability attained by our BQR + GPC credible intervals. Panel (b): Histogram of the GPC-selected $\hat\eta$ values over 250 replicates.}
\label{f:populationMCID_coverage_hist}
\end{figure}

\bibliography{sample}
\end{document}